\def\b1{{1\!\!1}}
\def\cG{{\ca G}}
\def\sH{{\mathsf H}}
\def\bC{{\mathbb C}}           
\def\bN{{\mathbb N}}
\def\bR{{\mathbb R}}
\def\bS{{\mathbb S}}
\def\bT{{\mathbb T}}
\def\bZ{{\mathbb Z}}
\def\gA{{\mathfrak A}}       
\def\gB{{\mathfrak B}}
\def\gS{{\mathfrak S}}
\def\beq{\begin{eqnarray}}
\def\eeq{\end{eqnarray}}
\newcommand{\ca}[1]{{\cal #1}}         
\newtheoremstyle{thm}
{12pt}
{12pt}
{\itshape}
{}
{\itshape\bfseries}
{}
{1em}
{}
\theoremstyle{thm}
\newtheorem{theorem}{Theorem}
\newtheorem{lemma}[theorem]{Lemma}
\newtheorem{proposition}[theorem]{Proposition}
\begin{document}

\hfill{\sl  UTM 751  Revised version, July 2012} 
\par 
\bigskip 
\par 
\rm


\par
\bigskip
\large
\noindent
{\bf   Mass operator and dynamical implementation of mass superselection rule}
\bigskip
\par
\rm
\normalsize


\noindent {\bf Eleonora Annigoni$^{a}$} and {\bf Valter Moretti$^{b}$}\\
\par
\noindent Department of  Mathematics, Faculty of Science, University of Trento, via Sommarive 14, 38050 Povo (Trento), Italy.
\smallskip

\noindent $^a$ eleonora.annigoni@gmail.com\qquad $^b$ moretti@science.unitn.it\\
 \normalsize

\par

\rm\normalsize


\rm\normalsize


\par
\bigskip

\noindent
\small
{\bf Abstract}.
We start reviewing Giulini's dynamical approach to Bargmann  superselection rule  proposing some improvements.  First of all we discuss some general features of the central extensions of the Galileian group used in Giulini's programme, in particular focussing on the interplay 
of classical and quantum picture, without making any particular choice for the multipliers.
Preserving  other features of Giulini's approach, we modify the mass operator of a Galilei invariant quantum system  to obtain a mass spectrum that is  
(i) positive and  (ii)  discrete, so  giving rise to  a standard (non-continuous) superselection rule. 
The  model results to be  invariant under time reversal  but  a further degree of freedom appears that  can be interpreted as describing an internal 
conserved charge of the system. (However,  adopting a POVM approach, the unobservable degrees of freedom can be pictured as a generalized observable 
automatically gaining a positive mass  operator without assuming the  existence of such a charge.) The effectiveness of Bargmann rule is shown to be
 equivalent to an averaging procedure over the unobservable degrees of freedom of the central extension of Galileian group.  Moreover, viewing the  Galileian 
 invariant quantum mechanics as a non-relativistic limit,  we prove that the above-mentioned averaging procedure giving rise to Bargmann superselection rule 
  is nothing but an effective de-coherence phenomenon due to time evolution if assuming that real measurements includes a temporal averaging procedure. 
  It happens when the added term $Mc^2$  is taken in due  account in the Hamiltonian operator since, in the  dynamical approach, the mass $M$ is an
  operator and cannot be trivially neglected
as in classical mechanics.  The presented results are quite general and rely upon the only hypothesis that the mass operator has point spectrum.
 These results explicitly show the 
interplay of the period of time of the averaging procedure, the energy content of the considered states, and the minimal difference of the mass operator 
eigenvalues.
\normalsize

\section{Introduction}
As is well know, in its elementary form, the celebrated superselection rule for the mass due to Bargmann states that,
for a quantum non-relativistic system a superposition of vectors representing pure states with different values of the total mass of the system 
 cannot represent a pure state of the system. While an incoherent superposition of permitted vectors is  an allowed mixture.\\
In \cite{Giulini},  Giulini has stressed that, comparing with other superselction rules,  Bargmann rule does
not have a well defined status. 
 This is because, in non-relativistic quantum physics, the mass enters the model of a physical system as a definitory quantity and not as an 
 observable that, depending on the state of the system, can or cannot be defined.  The standard motivation of Bargmann's superselection
  rule is based on the requirement of Galileian invariance. In Giulini's view, this approach does not permit
  to investigate the existence of any  dynamical phenomenon giving rise to the superselection rule.
  Starting from the classical Hamiltonian formulation, in \cite{Giulini}, Giulini has been able to re-interpret Bargmann's rule
   into a dynamical framework giving an explicit example of mass operator and assuming a central extension of the Galileian group
    as the physical group of symmetries of the system rather the Galileian one.
In that framework the superselection rule turns out to be equivalent to stating that, for some physical reason, 
   the extended system cannot be localized  with respect to the variable conjugated with the total mass operator.
  In  \cite{Giulini2}, it has been claimed that this non-localizability property arise from 
   some de-coherence process \cite{BGJKS}.  It is clear that the model has been  proposed as, and has to be 
  considered as, nothing but an elementary, not necessarily physical, example to illustrate a general viewpoint. However it is interesting 
  on its own right from a mathematical-physics viewpoint.
As a matter of fact, in this work we seriously consider Giulini's model for the mass operator in quantum non-relativistic theories. The reason
is twofold. On the one hand we would like to improve (from a pure mathematical-physics viewpoint)
 that model, getting rid of some difficulties (especially the  non-positive spectrum of the mass operator). On the other hand, accepting Giulini's 
  idea  that some superselection rules, formally obtained imposing invariance requirements,
  may actually arise from some real physical process,  we  intend to explictly discuss
  an elementary effective de-coherence process responsible for the mass superselection rule, relying upon the fact that 
  physical instruments take temporal average values.  
Regarding the first goal, changing the classical extended phase space with respect to that defined in \cite{Giulini}, we  indeed construct a 
classical model of the mass dynamical variable
whose quantization gives rise to a positive point-spectrum for the corresponding self-adjoint operator. In this way a
 further degree of freedom arises that may be interpreted as a conserved charge of the studied system of particles.
  Concerning the second goal we prove that, assuming 
  the existence of a mass operator,  the effective de-coherence process arises just to the 
  presence of the mass operator as an added term to the non relativistic Hamiltonian. That term is usually and incorrectly neglected  when performing the most simple
  non-relativistic approximation.
The mass operator leading to the mentioned result may have a different structure
with respect to that discussed in \cite{Giulini} and in the first part of this work, provided its spectrum 
 is a pure point-spectrum. 
In practice we prove that, if the temporal averaging process lasts an interval of time $T$ whose 
 order of magnitude is included in a certain interval $T_0 << T<< T_1$  determined by the total mass of the system and the energy content of the state, a vector $\Psi$ behaves  as a mixture of pure states allowed by the superselection rule, even if $\Psi$ is not permitted by the mass superselction rule when viewed as a pure state.
Referring to temporally averaged quantities over  longer periods of time,  the superselection rule holds, regardless the validity of the upper constraint $T<<T_1$,
as long as the scale of the mass (times $c^2$) is greater than  the energy content of the states, as it us usual in the macroscopic non-relativistic world.
   This effective de-coherence process, when referred to the model of mass operator introduced in the first part of the paper, can alternatively be
  interpreted as an averaging procedure on the unobservable part of the extended  symmetry group. \\
  The paper is organized as follows. After fixing the details of the studied physical system in Sec.2, we review Giulini's model from a slightly more mathematical 
  point of view both from classical and quantum level in Sec.3. This is done stating some results with a little more mathematical generality. Sec.4 is devoted to 
  improve the model getting rid of the negative part of the spectrum of the mass operator in particular. In Sec.5, after having focussed on some different approaches to the 
  superselection rule, we pass to  discuss the superselection rule as a physical process illustrated in Sec. 5.3. Conclusions are stated in Sec.6.
\section{Bargmann superselection rule and central extensions}\label{sec1}
As discovered by Bargmann \cite{Bargmann}, invariance under Galileian group for a quantum mechanical system
 entails that the system satisfies a superselection rule that imposes 
that any coherent superposition of states with different masses is forbidden.  
 Let us review  how it happens,
 also to fix notations and conventions used throughout.
\subsection{Non-relativistic quantum systems}
In the rest of the paper we deal with a general non-relativistic isolated quantum  system composed of $N$ particles, with masses $m_1,\ldots, m_N$,
described in an inertial reference frame equipped with orthonormal Cartesian coordinates. We shall employ these coordinates to define the Lebesgue measure of the various $L^2$ spaces. 
 The Hilbert space and the Hamiltonian  of the whole system are respectively:
  \beq \sH_0 =  \bigotimes_{k=1}^N \bC^{2s_k+1} \otimes L^2(\bR^3,d^3x_k)\:, \quad 
  \widehat{H}_0 = \overline{\sum_{k=1}^N \frac{\widehat{\bf p}_k^2}{2m_k}  + \widehat{V}(\{{\bf x}_i - {\bf x}_j\}})  \:.  \label{spaceH}\eeq
Above $\bC^{2s_k+1}$ and $L^2(\bR^3,d^3x_k)$ are the spin space 
and  is the orbital space of the $k$-th particle respectively,
 ${\bf x}_k= (x_{1k}, x_{2k}, x_{3k})$ and 
${\bf p}_k= (p_{1k}, p_{2k}, p_{3k})$
denote the positions and momenta  of the $k$-th particle referred to the mentioned coordinates.
A self-adjoint operator representing an observable associated with a corresponding classical quantity, $Q$, is  denoted with a hat: $\widehat{Q}$. 
We assume that the translationally invariant potential $V$ is also  rotationally invariant. 
Furthermore we henceforth suppose that the operator under the symbol of closure is essentially self-adjoint on  $C_0^\infty(\bR^{3N})$ and this is assured by Kato's theorem \cite{RS}, for a suitable choice of  $V$.
\subsection{Galileian group}
 Any  $g=(c_g, {\bf c}_g, {\bf v}_g, R_g)\in \bR\times \bR^3 \times \bR^3 \times SO(3) =: \cG$ acts as a Galileian transformation:
\begin{eqnarray}
(t', {\bf x}') = g (t,{\bf x})\quad \mbox{where} \quad
\left\{\begin{array}{cl} t' =& t + c_g \:,\\ {\bf x}' =& {\bf c}_{g}
+ t{\bf v}_{g}+
 \displaystyle R_g {\bf x}.\end{array}\right.\label{galileo}\end{eqnarray}
As a consequence $\cG$ becomes a group, the {\em Galileian group},  when equipped with the notion of product corresponding to the composition of coordinate transformations as above:
\beq (c_2, {\bf c_2}, {\bf v_2}, R_2) \cdot (c_1, {\bf c_1}, {\bf v_1},
R_1) =
\left(c_1+c_2,\: R_2 {\bf c_1}+ c_1{\bf v_2} + {\bf c_2}, \: R_2 {\bf
v_1}+ {\bf v_2},\: R_2R_1 \right)\:,\label{prodG}\eeq
with unit element and inverse, respectively given by
\beq e := (0, {\bf 0}, {\bf 0}, I) \quad \mbox{and}\quad  (c, {\bf c}, {\bf v}, R)^{-1} = (-c, R^{-1}(c{\bf v}-{\bf c}), -
R^{-1}{\bf v}, R^{-1})\:.\nonumber 
\eeq
Viewing the elements  $(c, {\bf c}, {\bf v}, R)$ of $\cG$ as $5\times 5$ real  matrices:
\begin{eqnarray}
 \left[
\begin{array}{ccc}
  R & {\bf v} & {\bf c}\\
  0 & 1 & c\\
  0 & 0& 1
\end{array}
\right]
\end{eqnarray}
the product in $\cG$  becomes the standard matrix product. Therefore $\cG$ is a subgroup of the matrix Lie group $GL(5,\bR)$ and, since $\cG$ is topologically closed in $\bR^{25}$, $\cG$ turns out to be a Lie group as well due to Cartan's theorem. With our hypothesis to consider $SO(3)$ instead the whole $O(3)$, $\cG$ results to be connected. 
It is evident from (\ref{prodG}) that $\cG$  is the iterated semi-direct product $\bR^4 \rtimes (\bR^3 \rtimes SO(3))$.

\subsection{Projective unitary representations of $\cG$ in quantum mechanics}
In view of Wigner theorem, the elements $g = (c_g, {\bf c}_g, {\bf v}_g, R_g)$ of the Lie group $\cG$ act, as active transformations, on the states of the system at $t=0$ in terms of unitary or antiunitary 
transformations $U_g$. Connectedness of $\cG$ assures that each $U_g$ is unitary. 
Up to a phase, the  explicit form of each $U_g$ is obtained
supposing that the induced active action on the observables  up to now mentioned,
$\widehat{A} \to U^*_g \widehat{A} U_g$, is that known from classical physics.
To this end we need another representation of the Hilbert space. We separate the  internal orbital degrees of freedom from the centre of mass degrees of freedom. The former ones  can be defined in terms of the translationally-invariant and boost-invariant Jacobi coordinates: ${\bf y}_k := {\bf x}_k - (\sum_{j=k+1}^N m_j {\bf x}_j)(\sum_{j=k+1}^N m_j)^{-1}$ for
$k=1,2,\ldots N-1$,
with conjugated translationally-invariant  and boost-invariant momenta ${\bf q}= (q_1,\ldots,q_{N-1})$.
From now on ${\bf X}= (X_{1}, X_{2}, X_{3})$ and ${\bf P} = (P_{1}, P_{2}, P_{3})$ denote the  position and momentum 
 of the centre of mass of the system, so that ${\bf P}$ coincides to the total momentum of the system, too.  The total mass is $M:= \sum_{k=1}^N m_k$. 
With these definitions, the Hilbert space $\sH$ turns out to be isomorphic to \beq 
 L^2(\bR^3, d^3X) \otimes L^2(\bR^{3N-3}, d^{3N-3}y) \bigotimes_{k=1}^N \bC^{2s_k+1} \label{spaceH2}\eeq
 and the Hamiltonian in (\ref{spaceH}) can be re-written employing the unitary transformation identifying $\sH$ to the Hilbert space in (\ref{spaceH2}):
 \beq
\widehat{H}_0 = \overline{\frac{\widehat{\bf P}^2}{2M} + 
\sum_{k=1}^{N-1} \frac{\widehat{\bf q}_k^2}{2\mu_k}  + \widehat{V}( \{{\bf y}_j\})}\:,\quad 
\mu_k := \frac{m_k \sum_{j=k+1}^N m_j}{m_k +\sum_{j=k+1}^N m_j}
\label{Ham2}
\eeq 
With these conventions, referring to  (\ref{spaceH2}) and (\ref{Ham2}),
 if all spins $s_k$ are integers \footnote{If the spin $s_k$ of some particle is half integer all the  discussion 
 may be re-cast,  replacing $SO(3)$ with its universal covering $SU(2)$ passing, as a consequence, to 
 deal with the universal covering of $\cG$.} one finds, explicitly writing the dependence on $M$,
\beq
U^{(M)}_g = Z^{(M)}_g \otimes  T_{(c_g,R_g)}  \bigotimes_{k=1}^N D^{2s_k+1}(R_g) \label{Ug}\:,
\eeq
where $SO(3) \ni R \mapsto D^{2s_k+1}(R)$ is the standard strongly-continuous irreducible unitary representation of $SO(3)$ for spin $s_k$ and
 $\bR \times SO(3) \ni (c,R)\mapsto T_{(c,R)}$ is the strongly-continuous unitary representation of the subgroup of time translations and rotations 
 on the internal orbital degrees of freedom
(that is in turn the product of the two unitary representations of the two commuting mentioned subgroups). 
Notice that the relevant group of time translations here is just that generated by\footnote{Employing Kato's theorem one can fix $V$ so that  the operator under the symbol of closure and the analogue  in  (\ref{spaceH}) (i.e. (\ref{Ham2})) are essentially self-adjoint on the respective space of smooth compactly supported functions.} $\overline{\sum_{k=1}^{N-1} \frac{\widehat{\bf q}_k^2}{2\mu_k}  +
 \widehat{V}}$ in  Eq.(\ref{Ham2}), that concerns the internal orbital degrees of freedom 
only.
Finally, the unitary transformation $Z^{(M)}_g : L^2(\bR^3, d^3X) \to L^2(\bR^3, d^3X)$, is better described  in momentum 
picture (at $t=0$)  as, for $\psi \in L^2(\bR^3, d^3X)$ and thus its Fourer-Plancherel transform $\widetilde{\psi} \in L^2(\bR^3, d^3P)$:
$$\widetilde{\psi}'({\bf P}) := \left(\widetilde{Z}^{(M)}_g \widetilde{\psi}\right) ({\bf P}) = 
e^{i(c_g{\bf v}_g -{\bf c}_g)\cdot ({\bf P}-M{\bf v}_g)}
 e^{i \frac{c_g}{2M}({\bf P}-M{\bf v}_g)^2 + iM\gamma_g}\widetilde{\psi}(R_g^{-1}({\bf P}- M{\bf v}_g))\:,$$
where $\gamma_g \in \bR$ can be fixed arbitrarily. The factor $M$ has been introduced for pure convenience and could be embodied in $\gamma_g$ itself.
In position picture we get
\beq
\psi'({\bf X}) := \left(Z^{(M)}_g \psi\right)({\bf X}) =  e^{iM({\bf v}_g \cdot {\bf X} + \gamma_g)} e^{ic_g \frac{{\bf P}^2}{2M}} 
 \psi\left( R_{g}^{-1}( {\bf X} -c_g {\bf v}_g - {\bf c}_g)\right)
\eeq
Applying the time evolution operator $e^{-it\frac{{\bf P}^2}{2M}}$ on both sides, the known formula arises:
\beq \psi'(t,{\bf X}) =   e^{iM({\bf v}_g \cdot {\bf X}- {\bf v}^2_g t/2 + \gamma_g)} \psi(g^{-1}(t,{\bf X})) \:.\label{popular}\eeq
If  $(t,{\bf x}) \in \bR \times \bR^3$, it is convenient to define for later purposes:
\beq
\chi_g(t, {\bf x}) := e^{i({\bf v}_g \cdot {\bf x} - {\bf v}^2_g t/2) + \gamma_g} \quad \mbox{and}\quad
\varphi_g(t, {\bf x}) := {\bf v}_g \cdot {\bf x} - {\bf v}^2_g t/2 + \gamma_g \label{phases}\:.
\eeq
Differently from $SO(3) \ni R \mapsto D^{2s_k+1}(R)$  and
 $\bR \times SO(3) \ni (c,R)\mapsto T_{(c,R)}$, the map $\cG \ni g \mapsto Z^{(M)}_g$ is not a linear representation, since 
 the group composition rule is not preserved. Rather,  it reads $Z^{(M)}_{g'}Z^{(M)}_{g} = \omega_M(g',g) Z^{(M)}_{g'g}$, for some phases $\omega_M(g',g)$.
 Since, instead,  $T$ and $D$ verify the composition rule, one finally finds, where it is apparent 
 that the mass $M>0$ enters the form of the phases just as an exponent:
\beq U^{(M)}_{g'}U^{(M)}_{g} = \omega(g',g)^M U^{(M)}_{g'g} \quad \mbox{with}\quad \omega(g',g) =: e^{i\xi(g',g)}\:.\label{proj}\eeq
From (\ref{galileo}), (\ref{popular}) and (\ref{phases}), this identity entails (where all functions are referred to the same choice for the constants $\gamma_g$):
\beq
\varphi_g(g'^{-1}(t,{\bf x}))  + \varphi_{g'}(t,{\bf x}) =  \varphi_{g'g}(t,{\bf x}) + \xi(g',g)\quad \mbox{if $g,g'\in \cG$ and $(t,{\bf x}) \in \bR\times \bR^3$.} \label{central}
\eeq
 Notice that (\ref{proj}) is valid regardless the choice of the explicit representation of the Hilbert space  (\ref{spaceH}) or  (\ref{spaceH2}). 
The existence of the phases $\omega(g',g)^M$, called {\em multipliers},  is not so surprising  because states are represented by rays, i.e. 
vectors up to a phase. We are, in fact,  dealing with  a (strongly-continuous)  unitary {\em projective} representation 
of $\cG$. As for every such representation, due to the associativity of the product of operators (thus the particular forms of 
$\chi_g$ and $\omega(g',g)$ do not matter), the multipliers $\omega^M$ verify the following identity:
 \beq
 \omega(g'' g')\omega(g''g',g) = \omega(g'',g'g) \omega(g',g)\quad \mbox{if  $g'',g', g \in \cG$}\:, \label{cociclo}
 \eeq
 or equivalently (working mod $2\pi$)
  \beq
 \xi(g'', g')+ \xi(g''g',g) = \xi(g'',g'g) + \xi(g',g)\quad \mbox{if  $g'',g', g \in \cG$}\:. \label{cociclo2}
 \eeq
 Another couple of general  identities, useful in the following, arise by multiplying $U_{g'}U_g = \omega(g',g) U_{g'g}$ and
 $U_{g^{-1}}U_{g'^{-1}} = \omega(g',g) U_{(g'g)^{-1}}$:
 \beq
 \omega(g', g)\omega(g^{-1},g'^{-1}) = \frac{\omega(g,g^{-1})\omega(g',g'^{-1})}{\omega(g'g, (g'g)^{-1})}\quad \mbox{if  $g', g \in \cG$}\:, \nonumber 
 \eeq
or equivalently
  \beq
 \xi(g', g)+ \xi(g^{-1},g'^{-1}) = \xi(g,g^{-1}) + \xi(g',g'^{-1}) - \xi(g'g, (g'g)^{-1})\quad \mbox{if  $g', g \in \cG$}\:. \label{cociclo4}
 \eeq

 \subsection{Bargmann superselection rule}
Since each  $U^{(M)}_g$ is  assigned up to a phase $\rho_g\in U(1)$ corresponding to any particular choice of the constant $\gamma_g$
in the exponent of $\chi_g$ in (\ref{phases}), it could be re-defined as $U'^{(M)}_g= \rho_g U^{(M)}_g$, obtaining corresponding new multipliers $\omega'_M(g',g)$. Obviously the action of $\cG$ on the space of the pure states (i.e. the rays of $\sH$) is not affected by the choice of 
$\rho_g$.
A natural question however  concerns the possibility to get $\omega'_M(g',g)=1$ for all $g,g'$ suitably  choosing the $\rho_g$.
A well-known sufficient co-homological condition assuring the feasibility of that choice (for continuous representations) is stated in a celebrated theorem by Bargmann \cite{BR}.  The theorem states that it is possible to fix the phases 
to eventually get trivial multiplicators $\omega'(g',g)=1$ for all $g,g'$, when the second co-homology group of the Lie algebra of the considered Lie group is trivial\footnote{Actually the theorem applies to 
the continuous unitary projective representations of 
connected simply-connected Lie groups, so  the universal covering of $\cG$ has to considered instead of $\cG$ itself. It is obtained making use of the covering homomorphism $SU(2) \mapsto SO(3)$.}. As a matter of fact, $\cG$ does not verify that hypothesis, differently from the Poincar\'e group. However Bargmann's requirement is just sufficient and not necessary, therefore a closer investigation is necessary.\\
If no choice of the phases assures $\omega'_M(g',g)=1$ for all $g,g'$, a problem may appear whenever one wishes to consider a system {\em where the total mass} $M$ can assume {\em different} values, because  $\omega^M$ in (\ref{proj}) depends on the mass $M$ of the system.
Consider a vector  $\psi = \psi_M + \psi_{M'} \in \sH$ that is a superposition of two vectors
$\psi_M$,  $\psi_{M'}$ with different masses $M\neq M'$. The action of $\cG$ implies the appearance of {\em different} phases $\omega(g',g)^M\neq \omega(g',g)^{M'}$ in front of $\psi_M$ and $\psi_{M'}$ so that a {\em relative} phase appears.
That does not permit to define a unitary {\em projective} representation of $\cG$ on the full space, because the composition rule of $\cG$ fails to be preserved even taking the quotient respect to overall phases.  
Once again, to get rid of the relative phases,  one may hope to get  $\rho'_g U^{(M')}_g= \rho_g U^{(M)}_g$ for all $g\in \cG$, with a suitable choice of the phases  $\rho_g$  and $\rho'_g$ (i.e. of the constants $\gamma_g$).
Nevertheless \cite{Bargmann} Bargmann proved (see the appendix)  that, if $M,M'>0$ and $M\neq M'$,
whatever choice one makes for the phases $\rho_g$ and $\rho'_g$,  both:

(1) $\omega'_M(g',g)=1$ for all $g,g'$ cannot hold,

 (2) $\rho'_g U^{(M')}_g= \rho_g U^{(M)}_g$ for all $g\in \cG$ cannot hold.
 
 \noindent This no-go result leads to the celebrated Bargmann superselection rule \cite{Bargmann}:\\
 
\noindent {\bf Mass Superselection Rule}: {\em For a quantum non-relativistic system, thus admitting the Galileian group
 as a group of symmetries, a superposition of vectors representing pure states with different values of the total mass of the system 
 cannot represents a pure state of the system. (While an incoherent superposition of permitted vectors can represent an allowed mixture.)}

\subsection{Central extensions} We can view the unitary projective representation 
$\cG \ni g \mapsto U^{(M)}_g$ as a part of a proper unitary representation, provided we enlarge $\cG$ to a {\em central extension}.\\
 One starts from the product $U(1) \times \cG$
or $\bR\times \cG$  and defines therein the product, respectively:
\beq
(\chi',g') (\chi,g) = (\chi'\chi \omega(g',g), g'g)\:, \quad (r',g') (r,g) = (r+r'+  \xi(g',g), g'g)\:.
\eeq
Thanks to, respectively, (\ref{cociclo}) and (\ref{cociclo2}) (assumed to be exactly and not only  mod $2\pi$)
one easily verifies that these products endow, respectively,  $U(1) \times \cG$ and $\bR \times \cG$ with a group structure where, respectively, $U(1)$ and $\bR$ are included in the centre. 
These, respectively, $U(1)$ and $\bR$ {\em central extensions} of $\cG$,  have a natural structure of Lie groups if the functions $\omega$ and $\xi$ are sufficiently regular \cite{BR}. We indicate these groups with, respectively, $U(1) \times_\omega \cG$ and
$\bR \times_\xi \cG$.
It is evident that $U(1) \times_\omega \cG \ni (\chi, g) \mapsto \chi^M U^{(M)}_g$
and $\bR \times_\xi \cG \ni (r, g) \mapsto e^{iMr} U^{(M)}_g$ are unitary representations of the considered central extensions and they reduce to the unitary projective  representation $\cG \ni g \mapsto U^{(M)}_g$ when restricted, respectively, to the elements $(1,g)$
and $(0,g)$, $g\in \cG$. All that is well known in the literature and not only referring to Galileian group.  \\
Concerning the Galileian group, we can pre-announce that one of the ideas in \cite{Giulini, Giulini2} was that 
actually the true physical group of transformations is not $\cG$, but it is one of its central extensions.

\section{Giulini's model revisited} 
 Giulini \cite{Giulini} has stressed that Bargmann rule has a not well-defined status (comparing with other superselction rules). 
 This is due to the fact that, in non-relativistic quantum physics, the mass appears as a definitory quantity for a given  system and not as an 
 observable that, depending on the state of the system, can or cannot be defined. In Giulini's view, this approach does not permit
  to investigate the existence of any  dynamical phenomenon giving rise to the superselection rule.
   Giulini has been able to re-interpret Bargmann's rule
   into a dynamical framework, giving an explicit example of mass operator and assuming $\bR \times_\xi \cG$ as physical group of the system 
   instead of $\cG$.  The superselection rule is equivalent to stating that, for some physical reason to be investigated, 
   the extended system cannot be localized  with respect to the variable conjugated with the total mass operator.
  In  \cite{Giulini2}, it has been claimed that this non-localizability property could be  due to
   some de-coherence process \cite{BGJKS}.  It is clear that the model has to be 
  considered just as a elementary, not necessarily physical, example to illustrate a general viewpoint. However it is interesting 
  on its own right from a mathematical-physics viewpoint.
In the following we shall  review this model from a slightly more abstract viewpoint.

\subsection{Classical system}\label{secabcbde}
Let us start from  the classical Hamiltonian  action:
\beq 
S_0 = \int_{t_1}^{t_2}   \left[ \sum_{k=1}^{N} {\bf p}_k \dot{\bf x}_k  - H_0(\{{\bf p}_j, {\bf x}_j\}) \right] dt\label{action} \:.
\eeq
The idea is to define masses $m_k \in \bR$  as new canonical variables whose equation of motion is trivial, so that they are constants of motion. 
The simplest way to do it is thinking of the masses $m_k$ as momenta  of  corresponding conjugated variables  $\zeta_k \in \bR$, without modifying the Hamiltonian function.
As it does not contain any $\zeta_k$, the variables  $m_k$ turn out to be constant of  motion  as wished. The new action reads
(where $H$ is the same as $H_0$ but considering the $m_k$ as dynamical variables):
  \beq
  S= \int_{t_1}^{t_2}   \left[ \sum_{k=1}^{N} {\bf p}_k \dot{\bf x}_k   +\sum_{k=1}^{N} m_k \dot{\zeta}_k - 
  H(\{ m_j,  {\bf p}_j, {\bf x}_j\})\right] dt\:.
\eeq
Next we pass to define  the action of Galileian group on the extended phase space 
$\bR^{6N+2N}$,
including 
all variables $\{{\bf x}_k, {\bf p}_k, {\bf \zeta}_k,  m_k\}_{k=1,\ldots,N}$.  The idea is to require that:
 
 (a) $\cG$ acts in the standard way on $({\bf x}_k, {\bf p}_k)$; 
 
 (b) it leaves  the masses $m_k$ fixed;
 
 (c) $\cG$ acts on each $\zeta_k$ in the same way regardless the value of $k$; 
 
 (d) the subgroup of time translations acts in the standard way with respect to the (autonomous) Hamiltonian flow;
 
 (e) the action $S'$ remains invariant under the action of the remaining one-parameter subgroup of $\cG$.\\
\noindent  {\em We stress that, in this way,  the solutions of Hamilton equations are transformed to such solutions, so that  each $g \in \cG$ defines a symmetry of the given dynamical system}. \\ With these constraints one easily finds that, indeed every $g\in \cG$ acts on the phase space by means of a time-dependent {\em canonical transformation}
$F_{t,g}$. As a matter of fact, if $f^{(c)}(t):= f(t-c)$ denotes the standard time-translation along the Hamiltonian flow and
$g=(c_g, {\bf c}_g, {\bf v}_g, R_g)$, we have:
 \beq
F_{t,g} :  \begin{bmatrix} {\bf x}_k(t)\\  {\bf p}_k(t)\\ \zeta_k(t), \\ m_k(t)
\end{bmatrix} 
\mapsto
\begin{bmatrix} {\bf c}_g + (t-c_g){\bf v}_g + R_g{\bf x}^{(c)}_k(t) \\ 
 m_k{\bf v}_g+R_g{\bf p}^{(c)}_k(t) \\ 
 \zeta^{(c)}_k(t)- {\bf v}_g\cdot R_g {\bf x}^{(c)}_k(t) - {\bf v}_g^2(t-c_g)/2 + \Gamma_{g}\\
  m_k(t)\:,
\end{bmatrix} \label{Ftg}
\eeq
$\Gamma_{g}$ is an undetermined constant. 
Here a problem similar to that found in the quantum side arises: The map $\cG \ni g \mapsto F_{t,g}$ is not a representation of $\cG$, since the composition rule of the group does not work: in general,  
$F_{t,g'}\circ F_{t,g} \neq F_{t,g'g}$.  \\
The idea is to think of $F_{t,g}$ as a part of a true representation of a larger group than $\cG$, more precisely a $\bR$-central extension. 
Indeed the following result holds (it generalizes, from a mathematical viewpoint, a corresponding statement in 
Sec.2 of  \cite{Giulini}, as we do not fix any particular form for the multipliers).

\begin{proposition}\label{p1} Consider the central extension $\bR \times_\xi \cG$, where $\xi$ is that associated with 
the functions $\varphi_g$ as in (\ref{phases}) determined by an arbitrary  choice of the constants $\gamma_g$.\\
The class  of transformations
on the extended phase space  $F_{t, (r,g)} :\bR^{6N+2N} \to 
\bR^{6N+2N}$
 \beq
F_{t,(r,g)} :  \begin{bmatrix} {\bf x}_k(t)\\  {\bf p}_k(t)\\ \zeta_k(t), \\ m_k(t)
\end{bmatrix} 
\mapsto
\begin{bmatrix} {\bf c}_g + (t-c_g){\bf v}_g + R_g{\bf x}^{(c)}_k(t) \\ 
 m_k{\bf v}_g+R_g{\bf p}^{(c)}_k(t) \\ 
\zeta^{(c)}_k(t) + \varphi_{g^{-1}}(t-c_g, {\bf x}^{(c)}_k(t)) - r -\xi(g^{-1},g)\\
  m_k(t)
\end{bmatrix} \nonumber 
\:,
\eeq
is a representation of $\bR \times_\xi \cG$ in terms of time-dependent canonical transformations. 
In other words, for every $t\in \bR$:on

(i)  $F_{t,(r,g)}$ is canonical if $(r,g) \in \bR \times_\xi \cG$,

(ii) $F_{t,(r',g')}F_{t,(r,g)}
= F_{t,(r',g')(r,g)}$ if $(r',g'), (r,g) \in \bR \times_\xi \cG$.\\
Finally,  
$F_{t,(0,g)} = F_{t,g}$ in (\ref{Ftg}) when choosing $\Gamma_g := -\xi(g^{-1},g) + \gamma_g$ therein for all $t\in \bR$ and $g\in \cG$.
\end{proposition}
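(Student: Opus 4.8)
The plan is to verify the three claims (i), (ii) and the final identification directly, exploiting the structure of $F_{t,(r,g)}$ as a composition of the ``geometric'' part acting on $({\bf x}_k,{\bf p}_k,m_k)$ and the ``phase'' part acting on $\zeta_k$. First I would reduce (i) to a statement about the map at fixed $t$: since $f\mapsto f^{(c)}$ is just the Hamiltonian time-flow (canonical by construction, by (d)), it suffices to check that the residual transformation is canonical. The action on $({\bf x}_k,{\bf p}_k)$ is the obvious affine Galilei action with $m_k$ a parameter, and one checks $\{x_{ik},p_{jk'}\}=\delta_{ij}\delta_{kk'}$ is preserved and $\{x,x\}=\{p,p\}=0$. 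The nontrivial Poisson brackets to control are those involving $\zeta_k$ and $m_k$: we need $\{\zeta_k, m_{k'}\}$ to remain $\delta_{kk'}$ and $\{\zeta_k,{\bf x}_{k'}\},\{\zeta_k,{\bf p}_{k'}\}$ to stay zero after the shift by $\varphi_{g^{-1}}(t-c_g,{\bf x}^{(c)}_k(t))$. Here one uses that $\varphi_{g^{-1}}(t,{\bf x})={\bf v}_{g^{-1}}\cdot{\bf x}-{\bf v}_{g^{-1}}^2 t/2+\gamma_{g^{-1}}$ depends on ${\bf x}_k$ only through the combination already transformed, and crucially does \emph{not} depend on $m_k$, so the bracket with $m_{k'}$ is untouched; the brackets with the new ${\bf x}_{k'},{\bf p}_{k'}$ cancel because $\varphi_{g^{-1}}$ is a function of position alone and is added to the position-like variable $\zeta_k$. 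Equivalently, one can present $F_{t,(r,g)}$ as the time-$t$ conjugate of a genuine canonical transformation generated by a type-2 generating function linear in the new momenta, which makes canonicity manifest; I would mention this as the cleaner route.

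For (ii), the composition law, the plan is to compute $F_{t,(r',g')}\circ F_{t,(r,g)}$ component by component and match it against $F_{t,(r',g')(r,g)}=F_{t,(r'+r+\xi(g',g),\,g'g)}$. The $({\bf x}_k,{\bf p}_k,m_k)$ components reproduce the bare Galilei composition (\ref{prodG}) exactly as in the unextended case, so the only real content is the $\zeta_k$-component. There one accumulates $\varphi_{g^{-1}}(t-c_g,{\bf x}^{(c)}_k)-r-\xi(g^{-1},g)$ from the inner map and then $\varphi_{g'^{-1}}(\cdot)-r'-\xi(g'^{-1},g')$ evaluated at the already-transformed arguments from the outer map. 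The key algebraic identity needed to collapse the two $\varphi$-terms into the single term $\varphi_{(g'g)^{-1}}(t-c_{g'g},{\bf x}^{(c)}_k)$ plus a constant is precisely the cocycle-type relation (\ref{central}), applied with $g\rightsquigarrow g'^{-1}$, $g'\rightsquigarrow g^{-1}$ and evaluated at the appropriate spacetime point; combining it with the constant-rearrangement identity (\ref{cociclo4}) for $\xi(g^{-1},g)$-type terms, and with the additive cocycle (\ref{cociclo2}), should produce exactly $-(r'+r+\xi(g',g))-\xi((g'g)^{-1},g'g)$ as the constant, which is what $F_{t,(r'+r+\xi(g',g),g'g)}$ demands. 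I expect this bookkeeping — tracking which spacetime argument each $\varphi$ is evaluated at after the inner transformation, and getting the signs in (\ref{central})/(\ref{cociclo4}) to line up — to be the main obstacle; everything else is routine. It is worth isolating as a preliminary lemma the pointwise identity $\varphi_{g'^{-1}}(g^{-1}(t,{\bf x}))+\varphi_{g^{-1}}(t,{\bf x})=\varphi_{(g'g)^{-1}}(t,{\bf x})+\xi(g'^{-1},g^{-1})$ together with $\xi(g'^{-1},g^{-1})+\xi(g^{-1},g)=\xi(g'^{-1},g)+\xi(g^{-1}g',\dots)$-style consequences of the cocycle condition, so that the main computation reads cleanly.

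Finally, to recover $F_{t,(0,g)}=F_{t,g}$ of (\ref{Ftg}) under the stated choice $\Gamma_g=-\xi(g^{-1},g)+\gamma_g$, I would set $r=0$ in the proposition's formula and simply expand $\varphi_{g^{-1}}(t-c_g,{\bf x}^{(c)}_k(t))$ using the definition (\ref{phases}): it equals ${\bf v}_{g^{-1}}\cdot{\bf x}^{(c)}_k-{\bf v}_{g^{-1}}^2(t-c_g)/2+\gamma_{g^{-1}}$, and one checks that the boost parameter of $g^{-1}$ satisfies ${\bf v}_{g^{-1}}=-R_g^{-1}{\bf v}_g$ while the relevant position is $R_g{\bf x}^{(c)}_k$, so ${\bf v}_{g^{-1}}\cdot{\bf x}^{(c)}_k=-{\bf v}_g\cdot R_g{\bf x}^{(c)}_k$ — wait, one must be careful whether $\varphi_{g^{-1}}$ is evaluated at ${\bf x}^{(c)}_k$ or at its image; in (\ref{Ftg}) the $\zeta$-term is $\zeta^{(c)}_k-{\bf v}_g\cdot R_g{\bf x}^{(c)}_k-{\bf v}_g^2(t-c_g)/2+\Gamma_g$, so matching the position-linear and the quadratic-in-$t$ pieces forces the evaluation point and confirms the identification, while the leftover constants $\gamma_{g^{-1}}-\xi(g^{-1},g)$ are absorbed into $\Gamma_g$ exactly as stated (using, if needed, the relation between $\gamma_{g^{-1}}$ and $\gamma_g$ implied by (\ref{central}) at $g'=g^{-1}$). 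This last step is a direct substitution and presents no difficulty once the conventions for $g^{-1}$ from the formulas after (\ref{prodG}) are inserted.
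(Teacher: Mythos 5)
Your plan is correct and follows the paper's own proof essentially step for step: canonicity and the final identification $F_{t,(0,g)}=F_{t,g}$ by direct inspection, and the composition law (ii) by reducing everything to the $\zeta_k$-component and collapsing the two accumulated $\varphi$-terms via (\ref{central}) with the substitution $g\to g'^{-1}$, $g'\to g^{-1}$ together with (\ref{cociclo4}); checking preservation of Poisson brackets rather than of the symplectic matrix is an equivalent formulation of (i). Two bookkeeping slips should be fixed in the write-up: in your displayed preliminary lemma the argument of $\varphi_{g'^{-1}}$ must be $g(t,{\bf x})$ (the inner map sends $(t,{\bf x}_k)$ to $g(t,{\bf x}_k)$) and the cocycle constant is $\xi(g^{-1},g'^{-1})$, not $\xi(g'^{-1},g^{-1})$; and the vanishing of $\{\zeta'_k,{\bf p}'_{k'}\}$ is not because $\varphi_{g^{-1}}$ is ``added to a position-like variable'' but because the term $\delta_{kk'}{\bf v}_g$ coming from $\{\zeta_k,m_{k'}{\bf v}_g\}$ is exactly cancelled by $R_g\nabla\varphi_{g^{-1}}=R_g{\bf v}_{g^{-1}}=-{\bf v}_g$.
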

\noindent Before going on with the proof we notice that, varying 
 $(r,g)\in \bR \times \cG$, the class of functions $\{F_{t,(r,g)}\}_{t\in \bR}$ turns out to be 
one-to-one associated with the corresponding transformation acting in the extended phase-space  $F_{(r,g)} : \bR \times \bR^{6N+2N} 
 \to \bR \times \bR^{6N+2N}$:
  \beq
F_{(r,g)} :  \begin{bmatrix} t\\
{\bf x}_k\\  {\bf p}_k\\ \zeta_k, \\ m_k
\end{bmatrix} 
\mapsto
\begin{bmatrix} t'\\
{\bf x}'_k\\  {\bf p}'_k\\ \zeta'_k, \\ m'_k
\end{bmatrix}  := 
\begin{bmatrix} t + c_g\\ {\bf c}_g + t{\bf v}_g + R_g{\bf x}_k \\ 
 m_k{\bf v}_g+R_g{\bf p}_k \\ 
 \zeta_k +\varphi_{g^{-1}}(t, {\bf x}_k)  - r -\xi(g^{-1},g)\\
  m_k
\end{bmatrix}\:. \label{Frg}
\eeq

\noindent {\em Proof of Proposition \ref{p1}}. The fact that each transformation $F_{t,(r,g)}$ is canonical and the last statement 
 can be verified by direct inspection (the former by computing the time-constant Jacobian matrices of the considered transformations 
 and proving that they preserve the symplectic unit matrix). Let us prove (ii) referring to (\ref{Frg}). We notice that  $F_{t,(r',g')}F_{t,(r,g)}
= F_{t,(r',g')(r,g)}$ for all $t$ is equivalent to $F_{(r',g')}F_{(r,g)}
= F_{(r',g')(r,g)}$. 
 Let us prove that, in fact, the latter holds true. Consider, in (\ref{Frg}),
the transformation law for the components $\zeta_k$:
$$\zeta_k \mapsto \zeta'_k := \zeta_k + \varphi_{g^{-1}}(t,{\bf x}_k)- r - \xi(g^{-1},g)\:.$$
When composing two transformations $F_{(r',g')}$ and $F_{(r,g)}$, all components of the vectors in (\ref{Frg}) transform 
in agreement with the composition rule of $\cG$ (and the $m_k$ remain fixed), barring the components $\zeta_k$ that transform as:
$$\zeta_k \to \zeta_k'' = \zeta_k + \varphi_{g^{-1}}(t, {\bf x}_k) - \xi(g^{-1},g) + \varphi_{g'^{-1}}(g(t, {\bf x}_k)) - \xi(g'^{-1},g') - (r+r')\:.$$
Exploiting (\ref{central}) and (\ref{cociclo4}) with $g$ replaced for $g'^{-1}$ and $g'$ replaced for
$g^{-1}$, one can re-write it as
$$\zeta_k \to \zeta_k'' = \zeta_k + \varphi_{(g'g)^{-1}}(t, {\bf x}_k) - (r+r') 
- \xi((g'g)^{-1},g'g) \:.$$
It immediately implies
$F_{(r',g')}F_{(r,g)}
= F_{(r',g')(r,g)}$ as wished.\\
$\Box$

\subsection{Quantization}
The system described in the previous section can straightforwardly be quantized {\em \`a la Dirac}, taking into account the fact that 
$\zeta_k, m_k$ are pairs of canonical variables defined on the whole real axis and thus corresponding to canonically conjugated operators 
$\widehat{\zeta}_k, \widehat{m}_k$. The complete Hilbert space of the system is now extended to 
$$\sH:=  L^2(\bR^{N}\times \bR^{3N}, d^{N}m\otimes d^{3N} x) \simeq
\int_{\bR^N}^\oplus  \sH_{\{m_k\}} d^N m\:,\quad  \sH_{\{m_k\}} := L^2( \bR^{3N}, d^{3N} x)$$
where we see the Hilbert space of the system as a direct integral of the Hilbert spaces at fixed masses $m_k$ with respect to the Lebesgue measure on 
$\bR^n$. 	We are in a position to state a second proposition that makes more general, from a mathematical viewpoint, a corresponding 
statement in Sec.3 of  \cite{Giulini} (since we do not fix any particular form for the multipliers). The item (ii) shows the link 
between the unitary representation of the central extension $\bR \times_\xi \cG$ and the canonical representation of the central extension 
discussed in the previous proposition. Now, differently form the corresponding identity (\ref{popular}), the phases are re-absorbed because we are working with a (central) extension of $\cG$ instead of $\cG$ itself.

\begin{proposition}\label{p2} Consider the central extension $\bR \times_\xi \cG$, where $\xi$ is that associated with 
the functions $\varphi_g$ as in (\ref{phases}) determined by an arbitrary  choice of the constants $\gamma_g$ and
focus on the class $\{U_{(r,g)}\}_{(r,g)\in \bR \times_\xi \cG}$ of transformations on the extended Hilbert space (viewed as a direct integral)
$U_{(r,g)} : \sH \to \sH$, where $U_{t}(\{m_k\})$ is the time evolution operator in the fiber $\sH_{\{m_k\}}$ of the direct integral:
\beq
\left(U_{(r,g)} \Psi\right) (\{m_k\}, \{{\bf x}_k\})  = e^{i \sum_{j=1}^n({\bf v}_g \cdot {\bf x_j} + \gamma_g +r)m_j} U_{-b_g}(\{m_k\})
 \Psi\left(\{m_k\},  \{R_g^{-1}({\bf x}_k -c_g {\bf v}_g -{\bf c}_g)\}\right)\:.
\eeq
They satisfy the following.

(i)  $\bR \times_\xi \cG \ni (r, g) \mapsto U_{(r,g)}$ is a unitary representation of $\bR \times_\xi \cG$ that, for $r=0$, reduces to the unitary projective representation (\ref{Ug}) in each fibre $\sH_{\{m_k\}}$,
associated with the multiplier $e^{i M \xi(g',g)}$ with the total  mass $M= \sum_{j=1}^N m_j$.

(ii) Passing in $\zeta$-picture,  $\check{\Psi}' (\{\zeta_k\}, \{{\bf x}_k\})  :=  \left(\check{U}_{(r,g)} \check{\Psi}\right) (\{\zeta_k\}, \{{\bf x}_k\})$
and taking into account the time evolution, one has for $\Psi \in \sH$
\beq
\check{\Psi}' (t, \{\zeta_k\},\{{\bf x}_k\})  = \check{\Psi}(\check{F}_{(r,g)^{-1}}(t, \{\zeta_k\}, \{{\bf x}_k\})) 
\eeq
where $\check{F}_{(r,g)}$ is the function that considers only the first, the second and the fourth
components  of the function (\ref{Frg}):
 \beq
\check{F}_{(r,g)} :  \begin{bmatrix} t\\
{\bf x}_k\\  {\bf p}_k\\ \zeta_k, \\ m_k
\end{bmatrix} 
\mapsto
\begin{bmatrix} t + c_g\\ 
  {\bf c}_g + t{\bf v}_g + R_g{\bf x}_k\\ 
 \zeta_k +\varphi_{g^{-1}}(t, {\bf x}_k)  - r -\xi(g^{-1},g)
\end{bmatrix}\:. 
\eeq
\end{proposition}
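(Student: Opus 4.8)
The plan is to reduce everything to Proposition \ref{p1}, using the fact that $\zeta$-picture is to the direct integral over masses what position picture is to momentum picture: the operators $\widehat m_k$ act multiplicatively in the presentation $\sH = L^2(\bR^N\times\bR^{3N})$, while $\widehat\zeta_k = i\partial/\partial m_k$ generate translations in $m_k$; conversely, in the $\zeta$-picture (Fourier transform in the $m$ variables) $\widehat\zeta_k$ acts multiplicatively and $\widehat m_k$ generates translations in $\zeta_k$. So the strategy has three parts. First, I would verify (i) by a direct computation: pick $(r',g')$ and $(r,g)$, compute $U_{(r',g')}U_{(r,g)}$ applied to a generic $\Psi(\{m_k\},\{{\bf x}_k\})$, and compare with $U_{(r',g')(r,g)}$. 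The orbital part (the argument $R_g^{-1}({\bf x}_k - c_g{\bf v}_g - {\bf c}_g)$ and the internal time-evolution factor $U_{-b_g}(\{m_k\})$) composes exactly as in the standard (unextended) story, i.e. as in (\ref{Ug}), because those pieces genuinely satisfy the group law; this is the footnoted remark that $T$ and $D$ verify the composition rule. The only thing to check is that the scalar phase $\exp\{i\sum_j({\bf v}_g\cdot{\bf x}_j + \gamma_g + r)m_j\}$ composes correctly, and here the key identity is (\ref{central}): writing the phase as $\exp\{i\sum_j(\varphi_g(0,{\bf x}_j) + r)m_j\}$ up to the $-{\bf v}_g^2\cdot 0/2$ term and tracking the shift of ${\bf x}_k$ under the first factor, the cocycle relation (\ref{central}) converts the product of the two phases into the phase for $g'g$ plus exactly the term $\xi(g',g)\sum_j m_j = M\xi(g',g)$, which is precisely the contribution absorbed by adding $r+r'+\xi(g',g)$ in the group law of $\bR\times_\xi\cG$. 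Thus the composite is $U_{(r'+r+\xi(g',g),\,g'g)}=U_{(r',g')(r,g)}$. Unitarity of each $U_{(r,g)}$ is clear: the phase has modulus one, $U_{-b_g}(\{m_k\})$ is unitary on each fibre, and the spatial substitution is measure-preserving. Specializing to $r=0$ recovers (\ref{Ug}) fibrewise with multiplier $e^{iM\xi(g',g)}$, which is the second half of (i).

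For (ii), I would pass to the $\zeta$-picture by applying the (inverse) Fourier--Plancherel transform in the variables $m_k\leftrightarrow\zeta_k$, exactly paralleling the passage from the momentum-picture formula for $\widetilde Z^{(M)}_g$ to the position-picture formula (\ref{popular}) earlier in the paper. Under this transform, multiplication by $\exp\{i\sum_j({\bf v}_g\cdot{\bf x}_j+\gamma_g+r)m_j\}$ becomes the translation $\zeta_k\mapsto\zeta_k+({\bf v}_g\cdot{\bf x}_k+\gamma_g+r)$, and one must also translate by the $-{\bf v}_g^2 t/2$ term once the time evolution is reinstated (the internal evolution $U_{-b_g}$ acts only on the orbital variables and commutes with the $m$-Fourier transform). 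Comparing the resulting argument shift of $\check\Psi$ with $\varphi_{g^{-1}}$ as defined in (\ref{phases}) and with the inverse of the map (\ref{Frg}) — precisely its first, second and fourth rows, which is the definition of $\check F_{(r,g)}$ — one reads off that $\check\Psi'(t,\{\zeta_k\},\{{\bf x}_k\}) = \check\Psi(\check F_{(r,g)^{-1}}(t,\{\zeta_k\},\{{\bf x}_k\}))$. The algebraic bookkeeping of $-r-\xi(g^{-1},g)$ versus $+\gamma_g+r$ is the same sign-chasing already done in the proof of Proposition \ref{p1}, combined with the relation $\Gamma_g=-\xi(g^{-1},g)+\gamma_g$ stated there.

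The main obstacle is not conceptual but the careful synchronization of three conventions at once: the $t=0$ definition of $U_{(r,g)}$ versus the time-evolved statement, the direction of the Fourier transform $m\leftrightarrow\zeta$ and the associated sign in the induced translation, and the appearance of $g^{-1}$ (with the cocycle term $\xi(g^{-1},g)$) in (\ref{Frg}) as opposed to $g$ in the definition of $U_{(r,g)}$. Concretely, the trickiest line is showing that reinstating $e^{-it\widehat{\bf P}^2/2M}$ (fibrewise, i.e. $e^{-it\widehat{\bf P}^2/(2\sum m_k)}$) turns the static boost phase into the dynamical one and contributes exactly the $-{\bf v}_g^2 t/2$ piece of $\varphi_{g^{-1}}$, so that the shift in $\zeta_k$ matches $\varphi_{g^{-1}}(t,{\bf x}_k)$ on the nose rather than $\varphi_{g^{-1}}(0,{\bf x}_k)$ plus an error; this is the analogue of the step taking (\ref{phases}) to (\ref{popular}) and should be invoked rather than re-derived. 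Once (\ref{central}) and (\ref{cociclo4}) are brought to bear exactly as in the proof of Proposition \ref{p1}, the remaining manipulations are routine.
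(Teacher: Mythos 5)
Your proposal is correct and follows essentially the same route as the paper's own proof: part (i) via the decomposable/fibrewise action together with the multiplier relation (\ref{proj}) (equivalently the cocycle identity (\ref{central})), and part (ii) by repeating the derivation of (\ref{popular}) to get the time-evolved phase $e^{i\sum_k m_k(\varphi_g(t,{\bf x}_k)+r)}$, Fourier-transforming in the mass variables so that this phase becomes the shift $\zeta_k\mapsto\zeta_k+\varphi_g(t,{\bf x}_k)+r$, and matching that shift against $\check F_{(r,g)^{-1}}$ via $(r,g)^{-1}=(-r-\xi(g^{-1},g),g^{-1})$ and $\xi(g^{-1},g)=\xi(g,g^{-1})$. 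Nothing essential is missing.
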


\begin{proof} The first statement easily arises from the definition of direct integral, taking into account (a) that  $U_{(r,g)}$ is of decomposable type 
and it  reduces to unitary operator $e^{i M r}U_g^{(M)}$
on every fibre $\sH_{\{m_k\}}$ with the total  mass $M= \sum_{j=1}^N m_j$, and (b) that (\ref{proj}) 
holds therein. Concerning (ii), following the same route to reach the identity (\ref{popular}), we have:
$$\Psi' (\{m_k\}, t, \{{\bf x}_k\})
= e^{i\sum_k m_k(\varphi (t, {\bf x}_k) + r)}\Psi (\{m_k\}, g^{-1}(t,\{{\bf x}_k\})) \:,$$
where $t$
(whose position is temporarily chosen as follows for sake of convenience)
indicates the time evolution.
Therefore:
$$\check{\Psi}'(\{\zeta_k\}, t, {\bf x}_k)
= \frac{1}{(2\pi)^{N/2}}\int_{\bR^N} e^{i\sum_k m_k(\zeta_k +\varphi_g (t, {\bf x}_k) + r)}\Psi (\{m_k\}, g^{-1}(t,\{{\bf x}_k\})) d^Nm$$
$$= \check{\Psi}(\{\zeta_k +\varphi_g (t, {\bf x}_k) + r\},  g^{-1}(t,\{{\bf x}_k\}))\:.$$
The whole argument of $\check{\Psi}$ in the right-hand side, restoring the standard position of the variable $t$, can be written
$\check{F}_{(r,g)^{-1}}(t, \{\zeta_k\}, \{{\bf x}_k\})$, paying attention to the identities
$(r,g)^{-1} = (-r - \xi(g^{-1},g), g^{-1})$ and $\xi(g^{-1},g)= \xi(g, g^{-1})$.
\end{proof}
\noindent Within this minimal but nice approach, a total mass operator $\widehat{M}$ exists, it is nothing but the multiplicative operator $\sum_{k=1}^N m_k$ in the mass representation.
Therefore the superselection  rule of the mass can be coherently formulated and discussed. Similarly to the couple of conjugated
 variables $\widehat{\bf P}_k, \widehat{\bf  X}_k$,  one can define a corresponding couple $\widehat{M}, \widehat{Z}$ 
  where the latter is the quantum observable associated with the center of mass for the $\zeta_k$ coordinates. As stressed in \cite{Giulini,Giulini2}, 
  the superselection rule can be stated by requiring 
  that localizability in the variable $\widehat{Z}$ is forbidden since the allowed states are sharply localized with respect to the variable $\widehat{M}$. 
  The general suggestion by 
  Giulini is the following.``It seems plausible that many derivations of superselection rules from
purely formal arguments in fact make at least one contingent physical assumption
of that sort. For better understanding the actual physical input one should in our
opinion 1.) find the right dynamical theory in which the relevant quantities are
manifestly dynamical and 2.) address the question of what is actually measurable
within that framework.''

\section{Improving Giulini's model}
Though, as clearly stressed in \cite{Giulini}, the presented model should not be considered as a realistic physical model,  but just as a minimal example to 
illustrate  a general, more physical, 
approach to superselection rules. Here we shall try to take it seriously. This will be done in order to improve it  and  go ahead with Giulini's general programme. \\
Within the illustrated framework, Bargmann's rule should  be discussed with the appropriate theoretical machinery and one expects to find the following features.
As we shall see in Sec.\ref{secquasifine}, any  superselction rule implies, on the one hand that  the Hilbert space is
decomposed into an orthogonal direct sum of coherent sectors: 
$\sH = \sum_{m\in \sigma(\widehat{M})} \sH_m$ where $\sH_m$ is the eigenspace of $\widehat{M}$ (or of the observable inducing the superselection rule) with eigenvalue 
$m$. 
On the other hand, the von Neumann algebra of (bounded) observables\footnote{The observables in
 $\gA$ are obviously only the self-adjoint operators therein. Nevertheless we shall not insist on this distinction in the rest of the paper unless strictly necessary.} 
 allowed by Bargmann rule is not a factor because it turns out to be $\gA := \{P_m \:|\: m \in \sigma(\widehat{M})\}'$,  
 $P_m$ being  the orthogonal projector on $\sH_m$, that consequently belong to the center of $\gA$.
This framework does not seem to be completely appropriate for the constructed model, because  the found mass operator  has purely continuous spectrum so that 
no proper eigenspaces exist.
 This is not an  insurmountable difficulty since, as already stressed in \cite{Giulini,Giulini2}, superselection rules can be even formulated  in reference to an observable with (a part of) spectrum of 
 continuous type, decomposing 
 the  Hilbert space in terms of a direct integral over the spectrum itself.  Then the von Neumann algebra $\gA$ of (bounded) observables 
 allowed by the superselection rule is made of all the bounded operators commuting with the spectral measure of that observable.
 The  most annoying feature, arising in this way, is that the pure states of the system (the extremal states that must exist in view of Krein-Milman theorem)
describing states where  the observable is defined, cannot be, in general, represented by vectors of the Hilbert space.
 In general they are only algebraic states on $\gA$. In our case thus mass-defined states 
 are not representable by state vectors of $\sH$. One may adopt several viewpoints in this situations, more o less mathematically or physically minded.
 We shall not address this issue here, rather we shall try to turn the continuous spectrum into a point-wise one.\\
 A second, much  more serious problem, is that the spectrum of $\widehat{M}$ encompasses negative values. This is forbidden in physics and the problem has to be
 tackled and solved necessarily.  (Sticking to the case of continuous values of the mass, one could tray to solve this problem by assuming $m_k= e^{\mu_k}$ with $\zeta_k\in \bR$ conjugated to $\mu_k\in \bR$.) \\
 A third, quite minor, difficulty is a mismatch with the standard time-reversal operation. As $\zeta_k$ plays the r\^ole of a coordinate and $m_k$ 
 that of a conjugated momentum, one expects that, under time reversal, $\zeta_k(t) \to \zeta_k(-t)$ and $m_k(t) \to -m_k(-t)$.  Actually it should be in contradiction 
 with the equation of motion of $\zeta_k$, at least for $V$ not depending on $m_k$:
 $$\frac{d\zeta_k}{dt} = - \frac{{\bf p}_k^2}{2m^2_k}\:.$$
 We notice that the problem can be formally solved by interchanging the r\^ole of the two phase space coordinates, so that the time reversal would be $\zeta_k(t) \to -\zeta_k(-t)$ and $m_k(t) \to m_k(-t)$.\\
 In the rest of the paper, we shall improve the model along the indicated directions, solving all problems at the same time. Moreover we propose another perspective to interpret the superselection rule in term 
 of an averaging procedure on the unobservable part of the central extension of $\cG$. Eventually, we prove that the averaging process, and thus the mass superselection
 itself, can be interpreted in physical terms, just through a dynamical process as argued by Giulini, provided one takes relativistic corrections into account.

\subsection{Modified classical system}
We modify the  action (\ref{action}) again, but we perform two changes: (i) each coordinate $\zeta_k$
describes a circle, i.e. $\zeta_k \in (-\pi,\pi]$ with identified endpoints, (ii) it holds $m_k= |n_k|$
where $n_k\in \bR$ is the conjugated momentum of $\zeta_k$. In this case the masses are evidently positive and the mismatch with the time-reversal operation disappears.
On the other hand, a smoothness singularity arises for $n_k=0$. However a more severe singularity was already present in Giulini's model (and it appears in ours as well)
since the Hamiltonian (in both cases) is singular for $m_k=0$. The action is now:
  \beq
  S= \int_{t_1}^{t_2}   \left[ \sum_{k=1}^{N} {\bf p}_k \dot{\bf x}_k   +\sum_{k=1}^{N} n_k \dot{\zeta}_k - 
  H(\{|n_j|,  {\bf p}_j, {\bf x}_j\})\right] dt\:.
\eeq
As before  we pass to define  the action of Galileian group on the extended phase space 
$\bR^{6N} \times \bT^{N} \times \bR^N$,
including 
all variables $\{{\bf x}_k, {\bf p}_k, {\bf \zeta}_k, n_k\}_{k=1,\ldots,N}$.  Again the idea is to impose 
the same requirements (a),(b),(c),(d) and (e) as in Sec. \ref{secabcbde}; so that, in particular, the solutions of Hamilton equations are transformed to such solutions and consequently each $g \in \cG$ defines a symmetry of the given dynamical system. As before the masses are constants of motion as expected.  With these constraints one easily finds again that indeed every $g\in \cG$ acts on the phase space by means of a time-dependent  canonical transformation
$G_{t,g}$. If
$g=(c_g, {\bf c}_g, {\bf v}_g, R_g)$,
 \beq
G_{t,g} :  \begin{bmatrix} {\bf x}_k(t)\\  {\bf p}_k(t)\\ \zeta_k(t), \\ n_k(t)
\end{bmatrix} 
\mapsto
\begin{bmatrix} {\bf c}_g + (t-c_g){\bf v}_g + R_g{\bf x}^{(c)}_k(t) \\ 
 |n_k|{\bf v}_g+R_g{\bf p}^{(c)}_k(t) \\ 
 \zeta^{(c)}_k(t)+\mbox{sign}(n_k)(- {\bf v}_g\cdot R_g {\bf x}^{(c)}_k(t) - {\bf v}_g^2(t-c_g)/2 + \Gamma_{g}) \:\: \mbox{\em mod $2\pi$}\\
  n_k(t)\:,
\end{bmatrix} \label{Ftg2}
\eeq
$\Gamma_{g}$ is an undetermined constant. As before,  the map $\cG \ni g \mapsto G_{t,g}$ is not a representation of $\cG$ since the composition rule of the group fails to be preserved. 
Nevertheless $G_{t,g}$ can be viewed as a part of a true representation of a $U(1)$-central extension of $\cG$.

\begin{proposition}\label{p12} Consider the central extension $U(1) \times_\omega \cG$, where the $\omega(g',g) = e^{i\xi(g',g)}$ are associated with 
the functions $\varphi_g$ as in (\ref{phases}) determined by an arbitrary  choice of the constants $\gamma_g$.\\
The class  of transformations
on the extended phase space  $G_{t, (\chi,g)} :\bR^{6N}\times \bT^N \times \bR^N \to 
\bR^{6N}\times \bT^N \times \bR^N$ and where $\chi= e^{ir}$:
 \beq
G_{t,(\chi,g)} :  \begin{bmatrix} {\bf x}_k(t)\\  {\bf p}_k(t)\\ \zeta_k(t), \\ n_k(t)
\end{bmatrix} 
\mapsto
\begin{bmatrix} {\bf c}_g + (t-c_g){\bf v}_g + R_g{\bf x}^{(c)}_k(t) \\ 
 |n_k|{\bf v}_g+R_g{\bf p}^{(c)}_k(t) \\ 
\zeta^{(c)}_k(t) + \mbox{\em sign}(n_k)(\varphi_{g^{-1}}(t-c_g, {\bf x}^{(c)}_k(t)) - r -\xi(g^{-1},g))\:\: \mbox{\em mod $2\pi$}\\
  n_k(t)
\end{bmatrix} 
\:,
\eeq
is a representation of $U(1)\times_\omega \cG$ in terms of time-dependent canonical transformations. 
In other words, for every $t\in \bR$:

(i)  $G_{t,(\chi,g)}$ is canonical if $(\chi,g) \in U(1) \times_\xi \cG$,

(ii) $G_{t,(\chi',g')}G_{t,(\chi,g)}
= G_{t,(\chi',g')(\chi,g)}$ if $(\chi',g'), (\chi,g) \in U(1) \times_\xi \cG$.\\
Finally,  
$G_{t,(1,g)} = G_{t,g}$ in (\ref{Ftg2}) when choosing $\Gamma_g := -\xi(g^{-1},g) + \gamma_g$ (mod $2\pi$)
 therein for all $t\in \bR$ and $g\in \cG$.
\end{proposition}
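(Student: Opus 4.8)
The plan is to imitate the proof of Proposition~\ref{p1} almost verbatim; the only genuinely new ingredients are the factors $|n_k|$ and $\mathrm{sign}(n_k)$, and the fact that the $\zeta_k$ now live on $\bT$, so that the relevant cocycle identities are needed only modulo $2\pi$. As there, one first passes from the family $\{G_{t,(\chi,g)}\}_{t\in\bR}$ to the single ``geometric'' transformation $G_{(\chi,g)}$ of $\bR\times\bR^{6N}\times\bT^N\times\bR^N$ --- the analogue of (\ref{Frg}) --- sending $(t,{\bf x}_k,{\bf p}_k,\zeta_k,n_k)$ to $(t+c_g,\,{\bf c}_g+t{\bf v}_g+R_g{\bf x}_k,\,|n_k|{\bf v}_g+R_g{\bf p}_k,\,\zeta_k+\mathrm{sign}(n_k)(\varphi_{g^{-1}}(t,{\bf x}_k)-r-\xi(g^{-1},g)),\,n_k)$, with $\chi=e^{ir}$ and the fourth entry taken mod $2\pi$. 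Exactly as in Proposition~\ref{p1}, the identity (ii) for $\{G_{t,(\chi,g)}\}_t$ is equivalent to $G_{(\chi',g')}G_{(\chi,g)}=G_{(\chi',g')(\chi,g)}$, while canonicity of each $G_{t,(\chi,g)}$ (for fixed $t$) reduces to canonicity of $G_{(\chi,g)}$, the remaining ingredient being the autonomous Hamiltonian flow $f\mapsto f^{(c_g)}$, which is canonical. All statements are proved on the open dense set $\{n_k\neq0\ \forall k\}$; the locus $\{n_k=0\ \text{for some}\ k\}$ is the usual singular set of the dynamics already discussed before the statement, and is discarded.

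For (i) I would fix $t$ (so $dt=0$) and pull back $\sum_k d{\bf p}_k\wedge d{\bf x}_k+\sum_k dn_k\wedge d\zeta_k$ by $G_{(\chi,g)}$; since the transformation is block-diagonal in the particle label, it suffices to treat one block. On each connected component of $\{n_k\neq0\}$ the map is smooth, $\mathrm{sign}(n_k)$ is locally constant and $d|n_k|=\mathrm{sign}(n_k)\,dn_k$, so $d{\bf p}_k'\wedge d{\bf x}_k'$ reproduces $d{\bf p}_k\wedge d{\bf x}_k$ (orthogonality of $R_g$) plus a cross term $\mathrm{sign}(n_k)\,dn_k\wedge({\bf v}_g\cdot R_g\,d{\bf x}_k)$ coming from the $|n_k|{\bf v}_g$ in ${\bf p}_k'$, while $dn_k'\wedge d\zeta_k'$ reproduces $dn_k\wedge d\zeta_k$ plus $\mathrm{sign}(n_k)\,dn_k\wedge\big(\nabla_{\bf x}\varphi_{g^{-1}}(t,{\bf x}_k)\cdot d{\bf x}_k\big)$. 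Because $\varphi_{g^{-1}}(t,{\bf x})={\bf v}_{g^{-1}}\cdot{\bf x}-{\bf v}_{g^{-1}}^2t/2+\gamma_{g^{-1}}$ and ${\bf v}_{g^{-1}}=-R_g^{-1}{\bf v}_g$ by the inversion formula of $\cG$, one has $\nabla_{\bf x}\varphi_{g^{-1}}(t,{\bf x}_k)\cdot d{\bf x}_k=-{\bf v}_g\cdot R_g\,d{\bf x}_k$, so the two cross terms cancel and the symplectic form is preserved; the ``mod $2\pi$'' merely says the $\zeta$-components are read in a local chart of $\bT^N$ and does not touch the $2$-form. This cancellation --- precisely what the $\mathrm{sign}(n_k)\varphi_{g^{-1}}$ term in the $\zeta_k$-law was built to produce --- is the one place where some care is required; I expect it to be the only real obstacle, everything else being bookkeeping already present in Proposition~\ref{p1}.

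For (ii) the ${\bf x}_k,{\bf p}_k,n_k$ components compose via the product of $\cG$ exactly as in Proposition~\ref{p1}, and $n_k$ --- hence $\mathrm{sign}(n_k)$ --- is left untouched by both transformations, so $\mathrm{sign}(n_k)$ factors out of the whole $\zeta_k$-increment and the remaining computation is literally the one in Proposition~\ref{p1} carried out inside the bracket: composing $G_{(\chi',g')}$ after $G_{(\chi,g)}$ yields the $\zeta_k$-increment $\mathrm{sign}(n_k)\big[\varphi_{g^{-1}}(t,{\bf x}_k)-\xi(g^{-1},g)+\varphi_{g'^{-1}}(g(t,{\bf x}_k))-\xi(g'^{-1},g')-(r+r')\big]$ mod $2\pi$, where $g(t,{\bf x}_k)$ is the spacetime Galilei action. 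Using (\ref{central}) with $g\to g'^{-1}$, $g'\to g^{-1}$ to rewrite $\varphi_{g^{-1}}(t,{\bf x}_k)+\varphi_{g'^{-1}}(g(t,{\bf x}_k))$ as $\varphi_{(g'g)^{-1}}(t,{\bf x}_k)+\xi(g^{-1},g'^{-1})$, and then (\ref{cociclo4}) with $\xi(h^{-1},h)=\xi(h,h^{-1})$ to reorganise the $\xi$'s --- all valid mod $2\pi$, which is automatic for a genuine multiplier and is the point where this case is actually easier than Proposition~\ref{p1} --- the bracket collapses to $\varphi_{(g'g)^{-1}}(t,{\bf x}_k)-\big(r+r'+\xi(g',g)\big)-\xi\big((g'g)^{-1},g'g\big)$. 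Since $(\chi',g')(\chi,g)=(\chi'\chi\,\omega(g',g),g'g)$ corresponds to $r''=r+r'+\xi(g',g)$ mod $2\pi$, this is exactly the $\zeta_k$-law of $G_{(\chi'',g'g)}$, proving (ii). Finally, setting $r=0$ in $G_{t,(1,g)}$ and expanding $\varphi_{g^{-1}}(t-c_g,{\bf x}_k^{(c)}(t))$ through ${\bf v}_{g^{-1}}=-R_g^{-1}{\bf v}_g$ turns its $\zeta_k$-line into that of (\ref{Ftg2}) with the stated $\Gamma_g$, the other three lines coinciding trivially; this is the last assertion.
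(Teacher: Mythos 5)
Your proof is correct and follows the same route the paper intends: the paper's own proof of this proposition is the single remark that it is ``essentially identical to that of Proposition \ref{p1} with the obvious changes,'' and your write-up is exactly that argument with the obvious changes made explicit. In particular, your verification of the cancellation of the cross terms $\mathrm{sign}(n_k)\,dn_k\wedge({\bf v}_g\cdot R_g\,d{\bf x}_k)$ in the pulled-back symplectic form (via ${\bf v}_{g^{-1}}=-R_g^{-1}{\bf v}_g$), and your observation that $\mathrm{sign}(n_k)$ factors out of the $\zeta_k$-increment so that the cocycle manipulation of Proposition \ref{p1} applies verbatim (now only mod $2\pi$), supply exactly the details the paper leaves to ``direct inspection.''
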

\noindent As before  we notice that, varying 
 $(\chi,g)\in U(1) \times \cG$, the class of functions $\{G_{t,(\chi,g)}\}_{t\in \bR}$ is
one-to-one associated with the corresponding transformation acting in the extended phase-space   $G_{(r,g)} : \bR \times \bR \times \bR^{6N}\times \bT^N \times \bR^N  
 \to \bR \times \bR^{6N}\times \bT^N \times \bR^N$:
  \beq
G_{(r,g)} :  \begin{bmatrix} t\\
{\bf x}_k\\  {\bf p}_k\\ \zeta_k, \\ n_k
\end{bmatrix} 
\mapsto
\begin{bmatrix} t'\\
{\bf x}'_k\\  {\bf p}'_k\\ \zeta'_k, \\ n'_k
\end{bmatrix}  := 
\begin{bmatrix} t + c_g\\ {\bf c}_g + t{\bf v}_g + R_g{\bf x}_k \\ 
 |n_k|{\bf v}_g+R_g{\bf p}_k \\ 
 \zeta_k +\mbox{sign}(n_k)(\varphi_{g^{-1}}(t, {\bf x}_k)  - r -\xi(g^{-1},g))\:\: \mbox{mod $2\pi$}\\
  n_k
\end{bmatrix}\:. \label{Grg}
\eeq
The composition rule  $G_{t,(\chi',g')}G_{t,(\chi,g)}
= G_{t,(\chi',g')(\chi,g)}$ is equivalent to  $G_{(\chi',g')}G_{(\chi,g)}
= G_{(\chi',g')(\chi,g)}$.\\

\noindent {\em Proof of Proposition \ref{p12}}. It is essentially identical to that of Proposition \ref{p1} with the obvious changes.
$\Box$
 
 \subsection{Modified quantum system}\label{MQS} We can pass to the quantization of the given system. Now each canonical pair
 $n_k, \zeta_k$ is associated to the canonical pair of densely-defined unbounded self-adjoint 
  operators $\widehat{n}_k, \widehat{\zeta}_k$ defined on 
 $\sH_k := L^2(\bS^1, d\zeta_k)$. The operator $ \widehat{\zeta}_k$ is the standard multiplicative operator with $\zeta$, 
when $\bS^1$ is identified to the segment $[0,2\pi)$, the range of the coordinate $\zeta$.
 The operator $\widehat{n}_k$  is the unique self-adjoint extension of $-i \frac{\partial}{\partial \zeta_k}$ defined on the smooth periodic functions 
 on $\bS^1$ (from Nelson's theorem it follows that the considered symmetric operator is essentially self-adjoint since the exponentials $e^{in_k \zeta}$ define a set of analytic vectors spanning a dense set). 
 Obviously $\sigma(\widehat{n}_k)= \sigma_p(\widehat{n}_k) = \bZ$.
The  whole Hilbert space can be defined as the closed subspace of $L^2(\bR^{3N} \times \bT^N, d^{3N}x \otimes d^N\zeta)$ 
 given by the orthogonal direct sum:
\beq
\sH := \bigoplus_{n_k \in \bZ\setminus\{0\}\:, k=1,\ldots, N} \sH_{\{n_k\}}\:, \quad \sH_{\{n_k\}} \simeq 
L^2(\bR^{3N}, d^{3N}x)\label{HEXT}
\eeq
where $\sH_{\{n_k\}}$ is the common eigenspace of all $\widehat{n}_k$ with the eigenvalues  in the set $\{n_k\}_{k=1,\ldots,N}$.
We have removed all the eigenspaces where at least one of $n_k$ vanishes. As each $\widehat{n}_k$ is a constant of motion, $\sH$ is invariant under time evolution. On the other hand  $\sH$ is invariant under the action of the (centrally extended) Galileian group as we are going to prove. The operator $\sH$ is thus well defined. A total mass self-adjoint operator is well defined with positive point-wise spectrum $\sigma(\widehat{M})= \sigma_p(\widehat{M}) = \{N,N+1,N+2,\ldots\}$ so that no problems arise in stating the superselection rule within the standard form.
 It is (taking into account Theorem VIII.33 in \cite{RS}, when interpreting the right-hand side)
$$\widehat{M} := \overline{\sum_{k=1}^N |n_k|}\:,$$
 where the $\widehat{n}_k$, initially defined on $L^2(\bS, d\zeta_k)$, are supposed to be restricted  to the orthogonal of the kernel in view of our definition of $\sH$.
 
\begin{proposition}\label{p22} Consider the central extension $U(1) \times_\omega \cG$, where $\omega = e^{i\xi}$, $\xi$ being that associated with 
the functions $\varphi_g$ as in (\ref{phases}) determined by an arbitrary  choice of the constants $\gamma_g$ and
focus on the class $\{U_{(\chi,g)}\}_{(\chi,g)\in \bR \times_\omega \cG}$ of transformations on the extended Hilbert space (\ref{HEXT})
$U_{(\chi,g)} : \sH \to \sH$, where $U_{t}(\{n_k\})$ is the time evolutor in the subspace
 $\sH_{\{n_k\}}$
\beq
\left(U_{(\chi,g)} \Psi\right) (\{n_k\}, \{{\bf x}_k\})  = \chi^{\widehat{M}} e^{i({\bf v}_g \cdot {\bf x_j} + \gamma_g)\widehat{M}} U_{-b_g}(\{n_k\})
 \Psi\left(\{n_k\},  \{R_g^{-1}({\bf x}_k -c_g {\bf v}_g -{\bf c}_g)\}\right)\:,
\eeq
They satisfy the following.

(i)  $U(1) \times_\omega \cG \ni (\chi, g) \mapsto U_{(\chi,g)}$ is a unitary representation of $U(1) \times_\omega \cG$ that, for $\chi=1$, reduces to the unitary projective representation (\ref{Ug}) in each subspace $\sH_{\{n_k\}}$,
associated with the multiplier $e^{i M \xi(g',g)}$ with the total  mass $M= \sum_{j=1}^N |n_j|$.

(ii) If $\Psi \in \sH_{\{n'_k\}}$, passing into $\zeta$-picture, defining 
 $\check{\Psi}' (\{\zeta_k\}, \{{\bf x}_k\})  :=  \left(\check{U}_{(\chi,g)} \check{\Psi}\right) (\{\zeta_k\}, \{{\bf x}_k\})$
and taking the time evolution into account, one has 
\beq
\check{\Psi}' (t, \{\zeta_k\},\{{\bf x}_k\})  = \check{\Psi}(\check{G}^{\{n'_k\}}_{(\chi,g)^{-1}}(t, \{\zeta_k\}, \{{\bf x}_k\})) 
\eeq
where $\check{G}^{\{n'_k\}}_{(\chi,g)}$ is the function that considers only the first, the second and the fourth
components  of the function (\ref{Grg}) taking the signs of the values $n'_k$ into account:
 \beq
\check{G}^{\{n'_k\}}_{(\chi,g)} :  \begin{bmatrix} t\\
{\bf x}_k\\  {\bf p}_k\\ \zeta_k, \\ n'_k
\end{bmatrix} 
\mapsto
\begin{bmatrix} t + c_g\\ 
  {\bf c}_g + t{\bf v}_g + R_g{\bf x}_k\\ 
 \zeta_k + \mbox{\em sign}(n'_k)(\varphi_{g^{-1}}(t, {\bf x}_k)  - r -\xi(g^{-1},g))
\end{bmatrix}\:. 
\eeq
In particular $\Psi'$ still belongs to $\sH_{\{n'_k\}}$ at any time $t$.
\end{proposition}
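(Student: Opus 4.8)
The plan is to follow, almost verbatim, the proof of Proposition \ref{p2}; the only genuinely new points are the replacement of the direct integral by the orthogonal sum (\ref{HEXT}) and the bookkeeping of the factors $\text{sign}(n_k)$ forced by the relation $m_k=|n_k|$. For part (i) I would first note that $U_{(\chi,g)}$ is of decomposable type with respect to (\ref{HEXT}): it leaves every label $n_k$ untouched, acting on $\Psi(\{n_k\},\{{\bf x}_k\})$ only by a power of $\chi$, a phase built from $\widehat M$, the time-evolution operator $U_{-b_g}(\{n_k\})$ and the rigid-motion pull-back; hence it maps each $\sH_{\{n_k\}}$ onto itself, which already establishes the closing assertion that $\Psi'$ stays in $\sH_{\{n'_k\}}$. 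On the fibre $\sH_{\{n_k\}}$ the operator $\widehat M$ is the positive scalar $M=\sum_k|n_k|$, so --- comparing with the position-picture form of $Z^{(M)}_g$ recorded just before (\ref{popular}) --- $U_{(\chi,g)}$ restricts there to $e^{iMr}$ times the operator $U^{(M)}_g$ of (\ref{Ug}); each factor (unimodular phases, unit-Jacobian rigid motions, the unitary time-evolution operator) is unitary on $L^2(\bR^{3N})$, so $U_{(\chi,g)}$ is unitary on the Hilbert-space direct sum, and strong continuity is inherited from that of the constituents. The homomorphism law is then nothing but (\ref{proj}) read in the fibre together with the product of $U(1)\times_\omega\cG$: writing $\chi=e^{ir}$, on $\sH_{\{n_k\}}$ one has $U_{(\chi',g')}U_{(\chi,g)}=(\chi'\chi)^M U^{(M)}_{g'}U^{(M)}_g=(\chi'\chi\,\omega(g',g))^M U^{(M)}_{g'g}=U_{(\chi'\chi\,\omega(g',g),\,g'g)}$, which is the claim since $(\chi',g')(\chi,g)=(\chi'\chi\,\omega(g',g),g'g)$; setting $\chi=1$ recovers (\ref{Ug}) in each $\sH_{\{n_k\}}$ with multiplier $\omega(g',g)^M=e^{iM\xi(g',g)}$.

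For part (ii) I would restrict to $\Psi\in\sH_{\{n'_k\}}$, so that in the $\zeta$-picture $\check\Psi(\{\zeta_k\},\{{\bf x}_k\})=(2\pi)^{-N/2}\,e^{i\sum_k n'_k\zeta_k}\,\psi(\{{\bf x}_k\})$ for some $\psi\in L^2(\bR^{3N})$, and then repeat the computation leading to (\ref{popular}): applying the time-evolution operator and using (\ref{galileo}) and (\ref{phases}), on that fibre $\Psi'(\{n'_k\},t,\{{\bf x}_k\})=e^{i\sum_k|n'_k|(\varphi_g(t,{\bf x}_k)+r)}\,\Psi(\{n'_k\},g^{-1}(t,\{{\bf x}_k\}))$. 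Taking the Fourier series in $\{\zeta_k\}$ --- which collapses to a single term here --- and invoking the elementary identity $n_k\zeta_k+|n_k|\alpha_k=n_k\bigl(\zeta_k+\text{sign}(n_k)\alpha_k\bigr)$ (legitimate precisely because $n_k\neq 0$ on $\sH$) with $\alpha_k=\varphi_g(t,{\bf x}_k)+r$, the $|n'_k|$-phase is absorbed into a $\text{sign}(n'_k)$-weighted shift of $\zeta_k$, giving $\check\Psi'(t,\{\zeta_k\},\{{\bf x}_k\})=\check\Psi\bigl(\{\zeta_k+\text{sign}(n'_k)(\varphi_g(t,{\bf x}_k)+r)\},\,g^{-1}(t,\{{\bf x}_k\})\bigr)$. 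It then remains only to recognise the right-hand side as $\check\Psi\bigl(\check G^{\{n'_k\}}_{(\chi,g)^{-1}}(t,\{\zeta_k\},\{{\bf x}_k\})\bigr)$: using $(r,g)^{-1}=(-r-\xi(g^{-1},g),g^{-1})$ and $\xi(g^{-1},g)=\xi(g,g^{-1})$, exactly as at the close of the proof of Proposition \ref{p2}, and (\ref{central}) together with (\ref{cociclo4}) to reconcile the $\varphi_{g^{-1}}$ and $\xi(g^{-1},g)$ terms in (\ref{Grg}), the shift $\text{sign}(n'_k)(\varphi_g+r)$ becomes precisely the $\zeta$-component prescribed by $\check G^{\{n'_k\}}$.

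The step I expect to be the real bottleneck is the sign bookkeeping in part (ii): one has to be sure that the single constant $\text{sign}(n'_k)$ --- meaningful exactly because we are working inside one fibre $\sH_{\{n'_k\}}$, which is why the statement is phrased for such $\Psi$ --- faithfully reproduces the $\text{sign}(n_k)$ of the classical map (\ref{Grg}), and that combining it with the $(r,g)^{-1}$ identities produces $\varphi_{g^{-1}}$ rather than $\varphi_g$ inside $\check G^{\{n'_k\}}$. Everything else --- decomposability, unitarity, the homomorphism law, the Fourier-series identity --- is a verbatim transcription of the arguments of Propositions \ref{p1} and \ref{p2}, with the direct integral replaced by the orthogonal sum (\ref{HEXT}).
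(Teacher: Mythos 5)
Your proposal is correct and follows essentially the same route as the paper's own proof: part (i) via the decomposable action on the orthogonal sum (\ref{HEXT}) together with (\ref{proj}) read fibrewise, and part (ii) by retracing the derivation of (\ref{popular}), absorbing the $|n'_k|$-phase into a $\mbox{sign}(n'_k)$-weighted shift of $\zeta_k$, and identifying the result with $\check{G}^{\{n'_k\}}_{(\chi,g)^{-1}}$ through the identities $(r,g)^{-1}=(-r-\xi(g^{-1},g),g^{-1})$ and $\xi(g^{-1},g)=\xi(g,g^{-1})$. Your explicit statement of the sign-absorption identity $n_k\zeta_k+|n_k|\alpha_k=n_k(\zeta_k+\mbox{sign}(n_k)\alpha_k)$ makes transparent a step the paper leaves implicit.
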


\begin{proof} The fact that $\Psi'$ still belongs to $\sH_{\{n'_k\}}$ at any time $t$ if $\Psi$ does 
immediately arises from the definition of $U_{(\chi,g)}$ since it does not changes the values of the $n_k$.
The first statement in the thesis easily arises from the definition $\sH$, taking into account (a) that  $U_{(r,g)}$ leaves fixed every $\sH_{\{n_k\}}$
and it  reduces to unitary operator $e^{i M r}U_g^{(M)}$
on every $\sH_{\{n_k\}}$ with the total  mass $M= \sum_{j=1}^N |n|_j$, and (b) that (\ref{proj}) 
holds therein. Concerning (ii), following the same route to reach the identity (\ref{popular}) we have, if $\chi=e^{ir}$
$$\Psi' (\{n_k\}, t, \{{\bf x}_k\})
= e^{i\sum_k |n_k|(\varphi (t, {\bf x}_k) + r)}\Psi (\{n_k\}, g^{-1}(t,\{{\bf x}_k\})) \:,$$
where $t$
(whose position is temporarily chosen as follows for sake of convenience)
indicates the time evolution.
Therefore:
$$\check{\Psi}'(\{\zeta_k\}, t, {\bf x}_k)
= \frac{1}{(2\pi)^{N/2}}{e^{i\sum_k n_k(\zeta_k +\mbox{sign}(n'_k)(\varphi_g (t, {\bf x}_k) + r))}}\Psi (\{n'_k\}, g^{-1}(t,\{{\bf x}_k\}))$$
$$= \check{\Psi}(\{\zeta_k +\mbox{sign}(n'_k)(\varphi_g (t, {\bf x}_k) + r)\},  g^{-1}(t,\{{\bf x}_k\}))\:.$$
The whole argument of $\check{\Psi}$ in the right-hand side, restoring the standard position of the variable $t$, can be written
$\check{G}^{\{n'_k\}}_{(r,g)^{-1}}(t, \{\zeta_k\}, \{{\bf x}_k\})$, paying attention to the identities
$(r,g)^{-1} = (-r - \xi(g^{-1},g), g^{-1})$ and $\xi(g^{-1},g)= \xi(g, g^{-1})$.
\end{proof}

\noindent {\bf Remarks}.\\ 
{\bf (1)} Obviously, the values of the masses can be changed just by modifying the radius of each $\bS^1$.\\
{\bf (2)} The presented model transforms the physically impossible sign of the mass into a new degree of freedom. Each particle has an internal space that can be used to describe some charge taking values in $\bZ \setminus \{0\}$. Indeed, for $N=1$ the Hilbert space of the system can be re-organised as follows:
$\sH= \oplus_{r=1,2,\ldots} (\sH_r\oplus \sH_{-r})$ where
$\sH_r\oplus \sH_{-r}\simeq  \bC^2 \otimes L^2(\bR^3, d^3x)$. In this approach $\widehat{n}$ is a charge operator.
The extension to the case of $N>1$ is obvious. Notice that the total charge defined in that way turns out to be automatically conserved under time evolution, since every $\widehat{n}_k$ commutes with the total  Hamiltonian operator.\\
{\bf (3)} The model admits an anti-unitary (since $\widehat{H}$ is bounded below) time-reversal operator $T$,
 thus verifying $TU_t= U_{-t}T$ if $U_t$ is the time evolution operator. It is, the bar denoting the complex conjugation:
$$(\check{T}\check{\Psi}) (\{\zeta_k\}, \{{\bf x}_k\}) := \overline{\check{\Psi}(\{\zeta_k\}, \{{\bf x}_k\})}$$ 
where we are working in $\zeta$ picture so that $\check{\Psi} \in L^2(\bT^N\times \bR^{3N}, d^N\zeta \otimes d^{3N}x)$. 
Actually, since the variables $\zeta_k$ are, at the end,  unobservable and they do  not appear in the Hamiltonian
operator, another possible time reversal operator may be (remembering that $\zeta_k$ is defned mod $2\pi$):
$$(\check{T}'\check{\Psi}) (\{\zeta_k\}, \{{\bf x}_k\}) := \overline{\check{\Psi}(\{-\zeta_k\}, \{{\bf x}_k\})}\:.$$ 
Reversing the sign of $\zeta_k$ is equivalent to reversing that of $n_k$.
In this context, assuming the viewpoint in (2), where the $\widehat{n}_k$ are charge operators, one may define a unitary  $k$-charge 
conjugation operator $C_k$ such that $C_kC_k=I$ and  $C_kU_t= U_tC_k$:
 $$(C_h{\Psi}) (\{n_k\}, \{{\bf x}_k\}) :={\Psi}(\{\eta_k^{(h)}n_k\}, \{{\bf x}_k\})\:,\quad \mbox{where $\eta_h^{(h)}=-1$ 
 and $\eta_k^{(h)}=1$ if $k\neq h$.}$$
 With this definition, $T= T' C_1\cdots C_N$. Within this approach (loosely following a procedure similar to that exploited in chromodynamics to construct uncolored states)
 uncharged systems could be described by the states in the subspace of $\sH$ verifying $C_k\Psi=
\Psi$ for every $k=1,\ldots,N$, that is $\Psi(\{n_k\}, \{{\bf x}_k\})={\Psi}(\{- n_k\}, \{{\bf x}_k\})$ for all values of the $n_k$. Therefore, for fixed masses 
$\{m_k\}$, the information on the state is contained in a 
unique wavefunction $\Psi(\{{\bf x}_k\})_{\{m_k\}} = {\Psi}(\{\pm m_k\}, \{{\bf x}_k\})$ and the internal degrees of freedom 
given by the signs of the $n_k$ do not play any role.\\
 {\bf (4)} The appearance of the internal degree of freedom, interpreted as a charge, to get rid of the problem of the sign of the mass, may seem annoying.
 Actually the one we have exploited is not the only possibility to define a positive  mass operator with point-wise spectrum. The core of the problem lies in the quantization procedure based on Dirac's approach, consisting in associating 
 self-adjoint operators with classical observables, with commutation rules preserved. 
 Keeping the constraint on commutation relations,
 one may drop the request of self-adjointness and stick to maximally symmetric operators, spectrally decomposed in terms of POVM instead 
 PVM \cite{BGL}. In this respect, the couple of classical variables $\zeta_k, m_k$ 
 can be quantized into a pair made of a symmetric operator $\widehat{\zeta}_k$
 whose spectral measure (POVM) covers $\bR$ completely,
 and a self-adjoint operator $\widehat{m}_k$ whose spectrum is $\bN$, all that maintaining the commutation relations on a dense subspace. 
 That is nothing but the standard procedure to quantize the pair 
 phase number-of-particle \cite{BGL}. Almost all the features of our model should be preserved following that new way.
 
\section{Mass superselection rule as a dynamical process for more general non-relativistic physical systems with point-wise mass-operator spectrum} \label{secquasifine}
In the previous section we have explicitly constructed a particular model where the mass observable, the (unbounded) self-adjoint operator $\widehat{M}$,
 is defined by a natural quantization procedure and $\sigma(\widehat{M}) = \sigma_p(\widehat{M}) = \{N, N+1, N+2, \ldots\}$.
In this section we assume, more generally, that a separable Hilbert space $\sH$, describing some physical system,  can be decomposed with respect to the spectral measure of  a (unbounded in general)
self-adjoint operator with pure point spectrum
$\widehat{M}$, not necessarily the previous one.  
\beq \sH = \bigoplus_{M \in \sigma_p(\widehat{M})}  \sH_M\:, \quad \mbox{where $\sigma(\widehat{M}) = \sigma_p(\widehat{M}) \subset (0,+\infty)$}\label{Hdec}\eeq
 In the following $P_M$ denotes the orthogonal projector on $\sH_M$ and $\gB(\sH)$ the $C^*$-algebra of bounded operators $B: \sH \to \sH$. 
Moreover, we denote by $\gS(\sH)$ the convex body of the positive trace-class  operators $\rho \in \gB(\sH)$ with $tr(\rho)=1$, and by 
$\gS_B$ the convex subset of those   whose eigenvectors are eigenvectors of $\widehat{M}$ too. In other words:
\beq \gS_B = \{\rho \in \gS(\sH)\:|\: P_M\rho = \rho P_M\quad \mbox{for all $M\in \sigma(\widehat{M})$}\}\:. \label{S}\eeq
Finally we define the von Neumann algebra in $\gB(\sH)$, whose center is non-trivial:
 \beq \gA_B:= \{P_M\:|\: M \in \sigma(\widehat{M}) \}'\label{A}\:.\eeq
 The supreselection rule for $\widehat{M}$ (in its elementary form in Hilbert spaces) declares  that only elements $\rho \in \gS_B$ 
 make sense as (normal) states on the physical system. However there is a subsequent 
 dual  restriction concerning the permitted observables.
Indeed, adopting the standard von Neumann-L\"uders'  postulate on the reduction of the state, to prevent
 the production of forbidden states as outcomes of  measurement procedures, one is committed to assume that only the (self-adjoint) elements $A \in \gA_B$ are allowed 
 as bounded observables. Actually both requirements can be relaxed in a sense that will turn out to be useful 
 in providing, following Giulini's ideas, a dynamical interpretation of the superselection rule itself. 

\subsection{How forbidden states determine permitted states and 
forbidden observables determine allowed observables} \label{secH}
Within this section we focus on the interplay of permitted and forbidden states in view of the superselection rule and the corresponding dual interplay 
at the level of the observables. This interplay will be subsequently exploited to grasp a dynamical interpretation of Bargmann superselection rule.\\
As is known, a forbidden coherent combination $\Psi\in \sH$ of mass-defined states can represent a 
permitted state anyway: A mixture of allowed pure states. This result generalizes to mixed states as we shall see shortly, for the moment we stick to the simplest situation with the following lemma.
The proofs of all propositions in this section can be found in the appendix.

\begin{lemma}\label{propstates} Referring to the Hilbertian decomposition (\ref{Hdec})  of the Hilbert space $\sH$ and the von Neumann algebra 
 $\gA_B$ in (\ref{A}) and the class of states $\gS_B$ in (\ref{S}),
consider  a vector $\Psi = \sum_{M\in S} \Psi_M$ with   
$S\subset \sigma(\widehat{M})$,  $\Psi_M \in \sH_M \setminus\{0\}$ and $||\Psi||=1$. For every $A\in \gA$ it holds:
 \beq
 \langle \Psi| A \Psi \rangle = \sum_{M\in S} \langle \Psi_M |A \Psi_M\rangle = tr(\rho_\Psi A)\:, \label{lemma}
 \eeq 
 where the positive, trace class operator with unitary trace $\rho_\Psi\in \gS_B$ is defined as:
\beq \rho_\Psi :=   \sum_{M\in S} p_M \Phi_M \langle \Phi_M |\: \cdot \:\rangle \quad \mbox{with
 $\Phi_M := \frac{\Psi_M}{||\Psi_M||}$ and $p_M := ||\Psi_M||^2$,}
 \eeq
 the convergence of the series being in the uniform topology.\\
 The map $\sH \ni \Psi \mapsto \rho_\Psi$ is not injective, since $\Psi' = \sum_{M\in S} \chi_M \Psi_M$ produces the same $\rho_\Psi$
for every choice of the phases $\chi_M \in U(1)$.
\end{lemma}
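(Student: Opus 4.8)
The plan is to verify identity \nref{lemma} by expanding $\langle \Psi| A \Psi\rangle$ using the orthogonal decomposition $\Psi = \sum_{M\in S}\Psi_M$ and exploiting the key algebraic fact that $A\in \gA_B$ means $A$ commutes with every projector $P_M$, hence $A$ leaves each $\sH_M$ invariant and $P_M A \Psi_{M'} = A P_M \Psi_{M'} = 0$ for $M\neq M'$. First I would write $\langle \Psi|A\Psi\rangle = \sum_{M,M'\in S}\langle \Psi_M| A \Psi_{M'}\rangle$ and observe that $A\Psi_{M'}\in \sH_{M'}$ while $\Psi_M\in \sH_M$, so by orthogonality of the sectors $\sH_M\perp \sH_{M'}$ only the diagonal terms $M=M'$ survive, giving $\langle \Psi|A\Psi\rangle = \sum_{M\in S}\langle \Psi_M|A\Psi_M\rangle$. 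Rewriting $\Psi_M = ||\Psi_M||\,\Phi_M = \sqrt{p_M}\,\Phi_M$ immediately turns this into $\sum_{M\in S} p_M\langle \Phi_M|A\Phi_M\rangle$, which is exactly $tr(\rho_\Psi A)$ for the stated $\rho_\Psi$.

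Next I would check that $\rho_\Psi$ is a well-defined element of $\gS_B$. Since $||\Psi||^2 = \sum_{M\in S}||\Psi_M||^2 = 1$, the coefficients $p_M\geq 0$ satisfy $\sum_{M\in S} p_M = 1$; each rank-one operator $\Phi_M\langle \Phi_M|\cdot\rangle$ is a positive trace-class projector of trace one, so the partial sums form a norm-Cauchy sequence (the tail has uniform norm $\sum_{M\notin F} p_M \to 0$ along finite $F\uparrow S$, because distinct $\Phi_M$ live in mutually orthogonal sectors and the operators have orthogonal ranges), establishing convergence in operator norm and hence $\rho_\Psi$ positive, trace-class with $tr(\rho_\Psi)=1$. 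That $\rho_\Psi\in\gS_B$, i.e. $P_M\rho_\Psi = \rho_\Psi P_M$ for all $M\in\sigma(\widehat M)$, is immediate since each $\Phi_M$ is a unit vector in the eigenspace $\sH_M$ so $\rho_\Psi$ is block-diagonal with respect to the decomposition \nref{Hdec}. Finally, the non-injectivity claim is read off directly: replacing $\Psi_M$ by $\chi_M\Psi_M$ with $|\chi_M|=1$ leaves $||\Psi_M||$ unchanged, hence leaves each $p_M$ unchanged, and replaces $\Phi_M$ by $\chi_M\Phi_M$, which does not change the rank-one operator $\Phi_M\langle \Phi_M|\cdot\rangle = (\chi_M\Phi_M)\langle \chi_M\Phi_M|\cdot\rangle$; thus $\rho_{\Psi'}=\rho_\Psi$.

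There is essentially no hard obstacle here; the only point requiring mild care is the mode of convergence when $S$ is infinite. One must argue that the series defining $\rho_\Psi$ converges in the \emph{uniform} (operator-norm) topology rather than merely weakly or in trace norm, and this is where I would invest the most words: because the summands have pairwise orthogonal ranges $\mathrm{span}\{\Phi_M\}$, the norm of any finite partial sum equals $\max_M p_M$ over the index set, and the norm of a tail sum equals $\max_{M\notin F} p_M$, which tends to $0$ since $\sum p_M <\infty$ forces $p_M\to 0$. Everything else is a one-line application of the defining property $[A,P_M]=0$ of $\gA_B$ together with orthogonality of the coherent sectors, so the proof is short.
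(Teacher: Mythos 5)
Your proof is correct and follows essentially the same route as the paper's: orthogonality of the sectors $\sH_M$ together with $[A,P_M]=0$ kills the off-diagonal terms, and uniform convergence of the series for $\rho_\Psi$ follows from the orthogonal ranges of the summands (the paper's own proof is terser, treating the identity as immediate and only verifying the trace-class property, but the content is the same). Your explicit observation that the tail norm is $\sup_{M\notin F}p_M$ is a sharper version of the bound $\sum_{M\notin F}p_M$ you use first; either suffices.
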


\noindent 
  A forbidden observable  $A \in \gB(A)$  can represent a permitted one  $A_B \in \gA_B$:
A ``convex'' linear combination of permitted observables.
\begin{theorem} \label{propobs} Referring to the Hilbertian decomposition (\ref{Hdec})  of the Hilbert space $\sH$ and the von Neumann algebra 
 $\gA_B$ in (\ref{A}) and the class of states $\gS_B$ in (\ref{S}),
consider  an operator  $A \in \gB(\sH)$. For every $\rho \in \gS_B$:
 \beq
tr(\rho A)    = tr(\rho A_B) \:,\label{tr}
 \eeq 
 where the operator $A_B \in \gA_B$ is defined as (where the convergence of the series is in the strong operator topology):  
\beq A_B :=   \sum_{M\in \sigma(\widehat{M})} P_M AP_M\:. \label{BA}\eeq
 The map $\gB(\sH) \ni A \mapsto A_B \in \gA_B$ is linear and satisfies the further properties:\\
  (i) $A_B=A$ if $A\in \gA$ so that $\gB(\sH) \ni A \mapsto A_B \in \gA_B$ is surjective, \\
 (ii) $||A_B|| \leq ||A||$,  \\
 (iii) $(A^*)_B=(A_B)^*$,\\ 
  (iv) If $P$ 
 is an orthogonal projector, $P_B$ is an {\bf effect} {\em \cite{BGL}}: a positive element in $\gB(\sH)$ bounded above by $I$.\\
 (v) $A_B = A'_B$ if and only if $tr(\rho A)= tr(\rho A')$ for all $\rho \in \gS_B$.
\end{theorem}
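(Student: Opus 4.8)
The plan is to prove everything from one structural observation: after fixing an orthonormal basis $\{e^M_i\}_i$ of each $\sH_M$, so that $\{e^M_i\}_{M,i}$ is an orthonormal basis of $\sH$ adapted to the decomposition (\ref{Hdec}), the operator $A_B$ is the ``block-diagonal part'' of $A$. First I would check that (\ref{BA}) makes sense: for $\psi\in\sH$ put $\psi_M:=P_M\psi$, so $\sum_M\|\psi_M\|^2=\|\psi\|^2$; the finite partial sums $\sum_{M\in F}P_MAP_M\psi=\sum_{M\in F}P_MA\psi_M$ have mutually orthogonal terms with $\|P_MA\psi_M\|\le\|A\|\,\|\psi_M\|$, hence they form a Cauchy net, converging to an element I call $A_B\psi$. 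This immediately gives the strong-operator convergence of the series (\ref{BA}), the linearity of $A\mapsto A_B$, and the estimate $\|A_B\psi\|^2=\sum_M\|P_MA\psi_M\|^2\le\|A\|^2\|\psi\|^2$, which is (ii). Using $P_NP_M=\delta_{NM}P_M$ one gets $P_NA_B=P_NAP_N=A_BP_N$ for all $N$, so $A_B$ commutes with every $P_M$ and thus $A_B\in\gA_B$; property (iii) then follows from $\langle\phi,A_B\psi\rangle=\sum_M\langle\phi,P_MAP_M\psi\rangle=\sum_M\langle P_MA^*P_M\phi,\psi\rangle=\langle(A^*)_B\phi,\psi\rangle$ for all $\phi,\psi\in\sH$.

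Next, (i): if $A\in\gA_B$ then $A$ commutes with each $P_M$, so $P_MAP_M=AP_M$ and $A_B=\sum_MAP_M=A\sum_MP_M=A$ (using $\sum_MP_M=I$ strongly). Since every $B\in\gA_B$ is thus of the form $B_B$, the map $\gB(\sH)\ni A\mapsto A_B\in\gA_B$ is surjective.

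The core is the trace identity (\ref{tr}). Fix $\rho\in\gS_B$: it is positive, self-adjoint, trace-class, and commutes with every $P_M$. Since $\rho A$ is trace-class, $\mathrm{tr}(\rho A)=\sum_{M,i}\langle e^M_i,\rho A e^M_i\rangle$ is absolutely convergent, so the following rearrangement is legitimate. For each basis vector $e^M_i\in\sH_M$, using $e^M_i=P_Me^M_i$, the self-adjointness of $P_M$, and $P_M\rho=\rho P_M$, one has
$$\langle e^M_i,\rho A e^M_i\rangle=\langle e^M_i,\rho P_M A P_M e^M_i\rangle=\langle e^M_i,\rho A_B e^M_i\rangle\,,$$
the last step because $A_Be^M_i=\sum_NP_NAP_Ne^M_i=P_MAP_Me^M_i$. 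Summing over $i$ and $M$ yields $\mathrm{tr}(\rho A)=\mathrm{tr}(\rho A_B)$. From here (v) is quick: ``$\Rightarrow$'' is immediate from (\ref{tr}); for ``$\Leftarrow$'', the hypothesis and (\ref{tr}) give $\mathrm{tr}(\rho(A_B-A'_B))=0$ for all $\rho\in\gS_B$, and taking $\rho=\Phi\langle\Phi|\:\cdot\:\rangle$ with $\|\Phi\|=1$ and $\Phi\in\sH_M$ (such $\rho$ commute with every $P_N$, hence lie in $\gS_B$) gives $\langle\Phi,(A_B-A'_B)\Phi\rangle=0$ for every $M$ and every unit $\Phi\in\sH_M$; since $A_B-A'_B\in\gA_B$ leaves each $\sH_M$ invariant, and a bounded operator on a complex Hilbert space with all diagonal matrix elements zero is the zero operator, we conclude $A_B=A'_B$. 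For (iv) I would observe that $A\mapsto A_B$ is positive: if $A=C^*C$ with $C=A^{1/2}$ then $P_MAP_M=(CP_M)^*(CP_M)\ge0$, whence $\langle\psi,A_B\psi\rangle=\sum_M\|CP_M\psi\|^2\ge0$; combined with $I_B=\sum_MP_M=I$, an orthogonal projector $P$ (so $0\le P\le I$) satisfies $0=0_B\le P_B$ and $I-P_B=(I-P)_B\ge0$, i.e. $0\le P_B\le I$, which is exactly the assertion that $P_B$ is an effect.

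I expect the only genuinely delicate points to be the convergence bookkeeping: the strong-operator convergence of the net of finite partial sums of (\ref{BA}), and the absolute convergence of the double series representing $\mathrm{tr}(\rho A)$ that justifies the rearrangement. Once the adapted orthonormal basis is in place, both reduce to Bessel-type estimates together with the commutation $P_M\rho=\rho P_M$; the algebraic properties (i)--(iii), (v) and the positivity argument for (iv) are then routine consequences.
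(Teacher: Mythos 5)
Your proof is correct and follows essentially the same route as the paper: the Bessel-type orthogonality estimate for the strong convergence and the bound $\|A_B\|\le\|A\|$, the direct block-diagonal computation for the trace identity, and the rank-one states $\Phi\langle\Phi|\,\cdot\,\rangle$ with $\Phi\in\sH_M$ for (v). The only cosmetic differences are that for (v) you invoke complex polarization directly instead of first splitting into self-adjoint and anti-self-adjoint parts, and for (iv) you use positivity plus order-preservation of $A\mapsto A_B$ rather than the paper's norm estimate on $(A_B)^{1/2}$; both variants are fine.
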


\noindent The map $\gB(\sH) \ni A \mapsto A_B \in \gA_B$ is not injective as one can easily prove.
Collecting both results we have a useful corollary generalizing Lemma \ref{propstates} to mixtures as pre-announced.

\begin{theorem}\label{teotot} With the hypotheses of Theorem \ref{propobs}, 
consider  trace-class positive operator $\rho$ with  $tr(\rho)=1$.  It turns out that $\rho_B \in \gS_B$ and that, 
for every $A\in \gA_B$:
 \beq
 tr(\rho A) = tr(\rho_B A) \:. \label{trtr}
 \eeq 
 The following facts hold.\\
(i) $\rho_B=\rho'_B$ if and only if $tr(\rho_B A)=tr(\rho'_B A)$ for all $A\in \gA_B$.\\
 (ii) If $\rho = \Psi \langle \Psi| \:\cdot \: \rangle$ then $\rho_B = \rho_\Psi$ as il Lemma \ref{propstates}.
\end{theorem}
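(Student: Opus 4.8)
The plan is to derive Theorem~\ref{teotot} by stitching together Lemma~\ref{propstates} and Theorem~\ref{propobs}, which already do essentially all the work; the content of the statement is that the two ``projection'' maps $\rho\mapsto\rho_B$ and $A\mapsto A_B$ are mutually adjoint under the trace pairing on the relevant subspaces.

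First I would observe that the map $\rho\mapsto\rho_B:=\sum_{M\in\sigma(\widehat M)}P_M\rho P_M$ makes sense for a trace-class $\rho$ exactly as in Theorem~\ref{propobs} (the series converging, now, in trace norm since $\sum_M \|P_M\rho P_M\|_1\le \|\rho\|_1$ by $\sum_M P_M=I$ and positivity), and that $\rho_B$ is again positive with $tr(\rho_B)=tr(\rho)=1$ because $tr(P_M\rho P_M)=tr(\rho P_M)$ and $\sum_M tr(\rho P_M)=tr(\rho)$. By construction $P_N\rho_B=\rho_B P_N$ for every $N$ (each summand $P_M\rho P_M$ commutes with every $P_N$), so $\rho_B\in\gS_B$. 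Then, to get \eqref{trtr}, I would apply Theorem~\ref{propobs} with the roles reversed: for $A\in\gA_B\subset\gB(\sH)$ and for the state $\rho\in\gS(\sH)$, cyclicity of the trace gives $tr(\rho A)=\sum_M tr(P_M\rho A P_M)$ once one justifies interchanging the sum with the trace; but $A\in\gA_B$ means $AP_M=P_MA$, so $tr(P_M\rho A P_M)=tr(P_M\rho P_M A)$, and summing yields $tr(\rho_B A)$. (Alternatively, and more cleanly, one can quote Theorem~\ref{propobs}(v)/\eqref{tr} directly: for $A\in\gA_B$ we have $A_B=A$ by Theorem~\ref{propobs}(i), and Theorem~\ref{propobs}\eqref{tr} applied to $\rho_B\in\gS_B$ gives $tr(\rho_B A)=tr(\rho_B A_B)=tr(\rho_B A)$ trivially — so instead I would use the symmetric computation above.)

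For (i): the ``only if'' direction is immediate from \eqref{trtr} applied to $\rho$ and $\rho'$. For the ``if'' direction, suppose $tr(\rho_B A)=tr(\rho'_B A)$ for all $A\in\gA_B$; since $\rho_B,\rho'_B\in\gS_B\subset\gA_B$ (a self-adjoint trace-class element of $\gA_B$) and $\gA_B$ is a von Neumann algebra containing in particular all rank-one projectors onto vectors lying in a single sector $\sH_M$, testing against $A=\Phi\langle\Phi|\,\cdot\,\rangle$ with $\Phi\in\sH_M$ shows that all matrix elements of $\rho_B-\rho'_B$ between vectors of the same sector agree; and the off-sector blocks of both $\rho_B$ and $\rho'_B$ vanish by definition, so $\rho_B=\rho'_B$. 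For (ii), when $\rho=\Psi\langle\Psi|\,\cdot\,\rangle$ with $\Psi=\sum_{M\in S}\Psi_M$, one computes $P_M\rho P_M=\Psi_M\langle\Psi_M|\,\cdot\,\rangle=p_M\,\Phi_M\langle\Phi_M|\,\cdot\,\rangle$ with $p_M=\|\Psi_M\|^2$ and $\Phi_M=\Psi_M/\|\Psi_M\|$, so $\rho_B=\sum_{M\in S}p_M\Phi_M\langle\Phi_M|\,\cdot\,\rangle=\rho_\Psi$ as defined in Lemma~\ref{propstates}.

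The one point needing genuine care — the main (though mild) obstacle — is the interchange of the infinite sum $\sum_M P_M\,\cdot\,P_M$ with the trace and the convergence mode of the series defining $\rho_B$: unlike in Theorem~\ref{propobs}, where convergence was only in the strong operator topology, here I would argue convergence in trace norm using $\sum_M \|P_M\rho P_M\|_1 = \sum_M tr(P_M|\rho|P_M)\le tr(|\rho|)=\|\rho\|_1<\infty$ (first reducing to $\rho\ge0$, then to general trace-class by the polar decomposition), which simultaneously legitimates all the termwise trace manipulations above and shows $\rho_B$ is trace-class with $tr(\rho_B)=1$. Everything else is bookkeeping with cyclicity of the trace and the commutation relations $P_MP_N=\delta_{MN}P_M$, $\sum_M P_M=I$.
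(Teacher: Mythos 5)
Your proposal is correct and follows essentially the same route as the paper: positivity and commutation with the $P_M$ give $\rho_B\in\gS_B$, identity (\ref{trtr}) follows from moving $A\in\gA_B$ past the projectors under the trace, (i) reduces to testing against rank-one operators $\Phi\langle\Phi|\cdot\rangle$ with $\Phi\in\sH_M$ exactly as in the proof of (v) of Theorem \ref{propobs}, and (ii) is a direct computation. Your extra care about trace-norm convergence of $\sum_M P_M\rho P_M$ is a slightly more explicit justification than the paper's direct trace computation on a basis of $\widehat{M}$-eigenvectors, but it is not a different argument.
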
 
 
\subsection{Allowed states and observables as equivalence classes of general states and observables} \label{secclasses}
 Summing up, we can say that, in the presence of Bargmann superselection rule, the set of states and that of  observables separately 
 decompose into the disjoint union of equivalence classes. Two states $\rho,\rho'\in \gS(\sH)$ are equivalent if they determine the same permitted state 
 $\rho_B=\rho'_B\in \gS_B$
 or, that is the same, if they produce the same results when applied to the same physically admissible observable $A\in \gA_B$.
 Similarly, two observables $A,A'\in \gB(\sH)$ are equivalent if they determine the same permitted observable $A_B=A'_B \in \gA_B$
 or, that is the same, if they produce the same results when applied to the same physically admissible state $\rho \in \gS_B$.
 It seems that, assuming that general states and observables make physical sense anyway, the existence of the superselection rule prevents us  to
  directly handle observables and states: we can just handle equivalence classes of these objects. If this obstruction is due to some physical phenomenon, the 
only   place where it can work is during the measurement procedures.
If $A\in \gB(\sH)$
and $\rho \in \gS(\sH)$, in experiments, we suppose to be evaluating  $tr(\rho A)$, while actually we are evaluating something 
different  and the final numerical outcome coincides to $tr(A_B \rho_B)$ instead of $tr(\rho A)$.  
For that reason we shall indicate by $\langle \langle A\rangle \rangle_\rho$  the physical measurement process of $A$ with respect to 
$\rho$ in the presence of the superselection rule and we wish to construct a mathematical model of that measurement process.

  \subsection{Superselection as averaging procedure on $U(1)$}  
   Referring to  the physical system discussed in Sec.\ref{MQS}, when we deal with the observables allowed by the superselection rule, we cannot experimentally observe the action of  $U(1)$ (appearing in the  central extension  $U(1) \times_\omega \cG$). Even if it does not suggest yet any physical process giving rise to the superselection rule,
   this remark can be 
  promoted to another formulation of the superselection rule itself.  As before, we assume that: even observables   and states in principle forbidden by the superselection rule can initially be considered, that
the true invariance group is $U(1)\times_\omega \cG$,  
  and that the 
  superselection rule enters physics just during measurements. As we cannot experimentally observe the action of  $U(1)$ looking at the results of the measurements,   the observable $U^*_{(e^{ir}, e)} A U_{(e^{ir}, e)}$ must be, for every value of $r\in \bR$,
   physically  indistinguishable from  $A$ when we perform measurements. Computing  $\langle \langle A\rangle \rangle_\rho$,  the final numerical outcome 
   has to coincide to $tr(\rho_BA_B)$.
   This suggests to suppose that real measurements procedure $\langle \langle A\rangle \rangle_\rho$  also includes an averaging procedure over  the 
   group $U(1)$ with respect to its Haar measure, since Haar measure is invariant under the action of the group itself. 
\beq \langle \langle A\rangle \rangle_\rho := \frac{1}{2\pi}\int_{0}^{2\pi}  tr\left( \rho U^*_{(e^{i\theta}, e)} A U_{(e^{i\theta}, e)} \right) \: d\theta\:.\label{avint}\eeq  
With this definition it results indeed $\langle \langle U^*_{(e^{ir}, e)} A U_{(e^{ir}, e)}\rangle \rangle_\rho = \langle \langle A \rangle \rangle_\rho$
for every $r \in \bR$.
   The integral average in (\ref{avint}) can be passed on the state taking advantage on the cyclic property of the trace.\\
   The following theorem proves that our idea grasps, in fact, some insight towards a dynamical model of the superselection rule, since the averaging procedure not only follows from the superselection rule, but even it
    gives rise to the superselection rule itself, in the sense that $\langle \langle A\rangle \rangle_\rho=
    tr(\rho_BA_B)$.

\begin{theorem} 
Consider the physical system discussed in Sec.\ref{MQS} where, in particular $\sigma(\widehat{M})= \{N, N+1, \ldots\}$ and the separable Hilbert space $\sH$ is decomposed as in (\ref{Hdec}). 
 If $A\in \gB(\sH)$ and  $\rho\in \gS(\sH)$, then:
 \beq
 \frac{1}{2\pi}\int_{0}^{2\pi}  tr\left( \rho U^*_{(e^{i\theta}, e)} A  U_{(e^{i\theta}, e)} \right)\: d\theta
 =
 \frac{1}{2\pi}\int_{0}^{2\pi}  tr\left( U_{(e^{i\theta}, e)} \rho U^*_{(e^{i\theta}, e)} A  \right) \: d\theta =
  tr(\rho_B A_B)\:.\label{avrho}
 \eeq
 and in particular, if $\Psi \in \sH$ with $||\Psi||=1$:
 \beq
 \frac{1}{2\pi} \int_0^{2\pi} \langle \Psi| U^*_{(e^{i\theta}, e)} A  U_{(e^{i\theta}, e)}  \Psi \rangle\: d\theta
 = tr(\rho_\Psi A_B)\:.\label{avPsi}
 \eeq
\end{theorem}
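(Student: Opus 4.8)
\noindent The plan is to reduce the statement to the spectral decomposition of $\widehat{M}$ together with Theorems \ref{propobs} and \ref{teotot}, which are already available. First I would specialise Proposition \ref{p22} to $g=e$: since then $c_e=0$, ${\bf c}_e={\bf 0}$, ${\bf v}_e={\bf 0}$, $R_e=I$, $b_e=0$ and (with the natural normalisation) $\gamma_e=0$, the operator $U_{(e^{i\theta},e)}$ collapses to the pure phase $\chi^{\widehat{M}}$ with $\chi=e^{i\theta}$, i.e. $U_{(e^{i\theta},e)}=e^{i\theta\widehat{M}}=\sum_{M\in\sigma(\widehat{M})}e^{i\theta M}P_M$, the series converging strongly. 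Consequently, for $A\in\gB(\sH)$ written in block form relative to (\ref{Hdec}), the conjugate $U^*_{(e^{i\theta},e)}AU_{(e^{i\theta},e)}=e^{-i\theta\widehat{M}}Ae^{i\theta\widehat{M}}$ acts on the block $P_{M'}(\,\cdot\,)P_M$ simply as multiplication by $e^{i\theta(M-M')}$.

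\noindent The second step is the $\theta$-average, and here the \emph{integrality} of the spectrum is what matters: in the model of Sec.\ref{MQS} one has $\sigma(\widehat{M})=\{N,N+1,\dots\}\subset\bZ$, so $M-M'\in\bZ$ and $\frac{1}{2\pi}\int_0^{2\pi}e^{i\theta(M-M')}\,d\theta=\delta_{M,M'}$ (this is exactly why the $U(1)$-averaging works for this model and not, say, for the continuous-spectrum model of Sec.\ref{secabcbde}). Hence, formally, $\frac{1}{2\pi}\int_0^{2\pi}U^*_{(e^{i\theta},e)}AU_{(e^{i\theta},e)}\,d\theta=\sum_{M}P_MAP_M=A_B$ with $A_B$ as in (\ref{BA}). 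To turn this into the identity $\frac{1}{2\pi}\int_0^{2\pi}tr(\rho\,U^*_{(e^{i\theta},e)}AU_{(e^{i\theta},e)})\,d\theta=tr(\rho A_B)$ against a trace-class $\rho\in\gS(\sH)$, I would expand $\rho=\sum_j\lambda_j\,\Phi_j\langle\Phi_j|\,\cdot\,\rangle$ in its spectral form and use that $\theta\mapsto U^*_{(e^{i\theta},e)}AU_{(e^{i\theta},e)}$ is strongly continuous and bounded by $\|A\|$: dominated convergence interchanges $\sum_j$ with $\int d\theta$, reducing matters to the single-vector identity $\frac{1}{2\pi}\int_0^{2\pi}\langle\psi|U^*_{(e^{i\theta},e)}AU_{(e^{i\theta},e)}\psi\rangle\,d\theta=\langle\psi|A_B\psi\rangle$, which in turn I would obtain by approximating $\psi=\sum_M P_M\psi$ with finite partial sums $\psi^{(n)}$ (for which the block computation and the term-by-term integration are legitimate) and passing to the limit using $\|U^*_{(e^{i\theta},e)}AU_{(e^{i\theta},e)}\|\le\|A\|$ uniformly in $\theta$ together with the strong convergence in (\ref{BA}).

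\noindent The remaining steps are short. Since $A_B\in\gA_B$, equation (\ref{trtr}) of Theorem \ref{teotot} gives $tr(\rho A_B)=tr(\rho_B A_B)$, which is the right-hand side of (\ref{avrho}). For the middle expression in (\ref{avrho}) I would use the cyclicity of the trace for each fixed $\theta$, $tr(\rho\,U^*_{(e^{i\theta},e)}AU_{(e^{i\theta},e)})=tr(U_{(e^{i\theta},e)}\rho\,U^*_{(e^{i\theta},e)}A)$, and integrate; alternatively, running the same averaging on $\rho$ in place of $A$ yields $\frac{1}{2\pi}\int_0^{2\pi}U_{(e^{i\theta},e)}\rho\,U^*_{(e^{i\theta},e)}\,d\theta=\rho_B$ in trace norm, after which $tr(\rho_B A)=tr(\rho_B A_B)$ by (\ref{tr}) of Theorem \ref{propobs} (applicable since $\rho_B\in\gS_B$). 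Finally, (\ref{avPsi}) is just (\ref{avrho}) with $\rho=\Psi\langle\Psi|\,\cdot\,\rangle$, because then $tr(\rho\,\cdot\,)=\langle\Psi|\,\cdot\,\Psi\rangle$ and $\rho_B=\rho_\Psi$ by Theorem \ref{teotot}(ii).

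\noindent The only genuinely delicate point I foresee is interchanging the $\theta$-average with the trace and with the strongly (not uniformly) convergent block series defining $A_B$; I expect the truncation-plus-uniform-boundedness argument sketched above to settle it with no real difficulty, so no serious obstacle is anticipated — the heart of the matter is simply the elementary orthogonality relation $\frac{1}{2\pi}\int_0^{2\pi}e^{i\theta k}\,d\theta=\delta_{k,0}$ on $\bZ$.
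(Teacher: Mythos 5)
Your proposal is correct and follows essentially the same route as the paper: both rest on the orthogonality relation $\frac{1}{2\pi}\int_0^{2\pi}e^{i\theta(M-M')}d\theta=\delta_{M,M'}$ for the integer spectrum, a density/uniform-boundedness argument to identify the averaged operator with $A_B$ on finite sums of mass eigenvectors (the paper packages this via a Riesz-represented sesquilinear form where you use truncations, a cosmetic difference), and an interchange of the trace sum with the $\theta$-integral (Fubini--Tonelli in the paper, dominated convergence in yours) to pass from the single-vector identity to general $\rho\in\gS(\sH)$. The only reversal is that you derive (\ref{avPsi}) from (\ref{avrho}) while the paper goes the other way, but the underlying computation is identical.
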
  
 
  \begin{proof} We start by proving (\ref{avPsi}).
  To this end consider the map
  $$\sH \times \sH \ni (x,y) \mapsto  J(x,y) :=\frac{1}{2\pi}\int_0^{2\pi} \langle x| U^*_{(e^{i\theta}, e)} A  U_{(e^{i\theta}, e)} y \rangle d\theta\:.$$
  It is well-defined because the integrand function is continuous, since $\theta \mapsto U_{(e^{i\theta}, e)}
  = e^{i\theta \widehat{M}}$ is strongly continuous and the scalar product is jointly continuous in its arguments.  $(x,y) \mapsto  J(x,y)$ is linear in $y$ and anti linear in $x$. Moreover, essentially taking advantage from the Cauchy-Schwarz inequality and the fact that 
  $||U_{(e^{i\theta}, e)}||= ||U^*_{(e^{i\theta}, e)}||=1$, one also has 
  $|J(x,y)| \leq ||A|| ||x||\: ||y||$. By a straightforward use of Riesz theorem, one concludes that there exists
  $J\in \gB(\sH)$ with $||J|| \leq ||A||$ and $\langle x|Jy\rangle = J(x,y)$. We claim that $J=A_B$.
  To prove it, it is enough establishing that $\langle x|Jy\rangle = \langle x| A_B y\rangle$ for $x,y \in R$,
  $R$ being a dense subset of $\sH$. Consider $$R :=\left\{ \left.\Psi = \sum_{M \in S} \Psi_M \:\right|\quad 
  \mbox{$S\subset \sigma(\widehat{M})$, $S$ finite}\right\}\:.$$
  $R$ is dense because of the orthogonal decomposition (\ref{Hdec}). Moreover, if $\Psi,\Psi' \in R$, for some sufficiently large $K<+\infty$
$$\langle \Psi| J\Psi' \rangle =\int_0^{2\pi} \sum_{M,M'<K} 
   \frac{e^{i(M'-M)\theta}}{2\pi} \langle \Psi_M | A \Psi'_{M'}\rangle d\theta  =  \sum_{M<K}  
  \langle \Psi_M | A \Psi'_M\rangle  =  \sum_{M<K}  
  \langle \Psi | P_MAP_M \Psi'\rangle $$ $$ =
  \langle \Psi| A_B \Psi' \rangle\:.$$  
  Thus $J=A_B$ so that, in particular, for every $\Psi, \Psi' \in \sH$:
\beq  
\frac{1}{2\pi} \int_0^{2\pi} \langle \Psi| U^*_{(e^{i\theta}, e)} A  U_{(e^{i\theta}, e)}  \Psi' \rangle\: d\theta =
 \langle \Psi| J \Psi'\rangle =  \langle \Psi| A_B \Psi' \rangle
 \eeq
  and, for $\Psi=\Psi'$:
  $$\frac{1}{2\pi} \int_0^{2\pi} \langle \Psi| U^*_{(e^{i\theta}, e)} A  U_{(e^{i\theta}, e)}  \Psi \rangle\: d\theta =  \langle \Psi| A_B \Psi \rangle$$
  that implies (\ref{avPsi}) due to (\ref{lemma}) since $A_B \in \gA_B$.\\
  Let us pass to the proof of (\ref{avrho}).
It is sufficient to prove the former identity, as then the latter arises from the cyclic property of the trace.  
   First of all, we notice that 
  the function  in the integrand in the left-hand side in (\ref{avrho}), $\theta \mapsto tr\left( \rho U^*_{(e^{i\theta}, e)} A  U_{(e^{i\theta}, e)} \right)$,
  is measurable as it is the sum of a series of continuous functions (expanding the trace on a basis of eigenvectors of $\rho$), so the integration  
  makes sense provided the function is integrable. We will compute the trace with respect to a Hilbert basis $\{\psi_n\}_{n\in \bN}$ (labelled on $\bN$ since 
  $\sH$ is separable in the considered case) of eigenvectors of $\rho$.  So that $\rho = \sum_{n\in \bN} p_n\psi_n \langle \psi_n | \:\cdot \:\rangle$ where
  $0\leq p_n \leq 1$ and $\sum_n p_n =1$.
   Let us prove that $\theta \mapsto tr\left( \rho U^*_{(e^{i\theta}, e)} A  U_{(e^{i\theta}, e)} \right)$ is integrable  exploiting Fubini-Tonelli with respect to the product measure 
  given by the product of $d\theta$ and the measure counting the points on $\bN$.  We have, as $||U_{(e^{i\theta}, e)} \psi_n||= ||\psi_n||=1$:
$$|\langle \psi_n | \rho U^*_{(e^{i\theta}, e)} A  U_{(e^{i\theta}, e)} \psi_n \rangle|  = p_n |\langle \psi_n | U^*_{(e^{i\theta}, e)} A  U_{(e^{i\theta}, e)} \psi_n \rangle| 
= p_n \langle U_{(e^{i\theta}, e)} \psi_n |  A  U_{(e^{i\theta}, e)} \psi_n \rangle  \leq p_n ||A||\:.$$  
  So
  $$\int_0^{2\pi} \sum_{n\in \bN} |\langle \psi_n | \rho U^*_{(e^{i\theta}, e)} A  U_{(e^{i\theta}, e)} \psi_n \rangle| d\theta \leq ||A||
  2\pi \sum_m p_n = 2\pi \:||A||\:.$$
  Fubini-Tonelli theorem implies that $\theta \mapsto tr\left( \rho U^*_{(e^{i\theta}, e)} A  U_{(e^{i\theta}, e)} \right)$ is integrable and we can swap the symbol of series 
  and that of integral:
  $$ \frac{1}{2\pi}\int_0^{2\pi} tr(\rho U^*_{(e^{i\theta}, e)} A  U_{(e^{i\theta}, e)}) d\theta = 
  \frac{1}{2\pi}\int_0^{2\pi} \sum_{n\in \bN} p_n \langle \psi_n |U^*_{(e^{i\theta}, e)} A  U_{(e^{i\theta}, e)} \psi_n \rangle d\theta$$
 $$ =  \sum_{n\in \bN} p_n \frac{1}{2\pi} \int_0^{2\pi}\langle \psi_n |U^*_{(e^{i\theta}, e)} A  U_{(e^{i\theta}, e)} \psi_n \rangle d\theta =
  \sum_{n\in \bN}   p_n tr(\rho_{\psi_n} A_B) =    \sum_{n\in \bN}   p_n \langle \psi_n| A_B \psi_n \rangle$$
  $$ = tr (\rho A_B) = tr (\rho_B A_B)\:.$$
  Above we have exploited (\ref{avPsi}), Lemma (\ref{propstates}) and Theorem (\ref{propobs}).
  \end{proof}

  \subsection{Superselection as an effective  dynamical process}
 As we have seen, the appearance of the superselection rule can be interpreted as an averaging procedure over the unobservable 
 part of the extended  Galileian group (supposed to be the true symmetry group of the system) when assuming that also  states and observables, in principle forbidden, can actually be handled. However that interpretation  works for the toy model discussed in Sec.\ref{MQS} 
that has a quite particular spectrum for the mass operator $\widehat{M}$: The difference of eigenvalues has to be an integer number. 
Otherwise one has to change the interval of integration in $\theta$ to match the period of the imaginary exponential and, in general, one can hardly fix it to agree with all the differences of the mass eigenvalues in a realistic model, since these differences may have irrational ratio. However the procedure can be improved taking the period larger and larger and exploiting the same idea as in Riemann-Lebesgue's lemma.\\
All that suggests that there is another, much more realistic,  interpretation of the one-parameter group 
 $\bR \ni \theta \mapsto e^{i \theta \widehat{M}}$, that permits to give a truly  dynamical interpretation of the 
 mass superselection rule just realizing Giulini's proposal. One has to pay attention to the fact that 
 standard quantum mechanics is an approximation of relativistic quantum mechanics (i.e. relativistic quantum field theory described in the one-particle space). Usually that approximation is performed at classical (not quantum) level and afterwards the model is quantized. At classical level the mass is a number and thus the  constant $Mc^2$, added to the classical Hamiltonian when performing the simplest 
 non-relativistic approximation of the (kinetic) energy,
$$\sqrt{c^2{\bf p}^2 + m^2c^4} = mc^2 + \frac{{\bf p}^2}{2m} + mc^2 \:O({\bf p}^4/m^4c^4)\:,$$ 
is usually neglected. 
However,  the lesson learned from the models discussed in \cite{Giulini} and in this paper 
 suggests instead that the mass has to be considered a full-fledged operator, that cannot be trivially dropped. Thus the term $\widehat{M}c^2$  has to be included in the potential $\widehat{V}$ appearing in the definition
 of the Hamiltonian operator (\ref{spaceH}). It accounts for the sum of the contributions $\widehat{m}_kc^2$ due to all the particles forming the system. In this way, no problems arise with the invariance under any central extension $U(1) \times_\omega \cG$, provided the representation of the $U(1)$ group in the center is given by
 $U(1) \ni \chi \mapsto \chi^{\widehat{M}}$. In particular, it holds for the model constructed  in Sec.\ref{MQS}
 where the time reversal symmetry is preserved. 
The classical Hamiltonian can be re-written as $\overline{\widehat{M}c^2 + \widehat{H}}$, where $\widehat{H}$ takes the form (\ref{spaceH}), where $\widehat{V}$ does not contain the term $\widehat{M}c^2$.
 In (\ref{spaceH}), the masses $m_k$ are, as before, replaced with corresponding operators $\widehat{m}_k$ commuting with all the usual dynamical variables of the system. In particular the spectral measure of $\widehat{M}$ commutes with that of $\widehat{H}$. We do  not assume any specific form for the operator $\widehat{M}$ but only that $\sigma(\widehat{M}) = \sigma_p(\widehat{M})\subset (0,+\infty)$ so that the decomposition (\ref{Hdec}) holds\footnote{We could assume, more weakly, that $\emptyset \neq \sigma_p(\widehat{M})\subset (0,+\infty)$. So that only a closed subspace $\sH_p$ of $\sH$
 is spanned by the direct sum of eigenspace of $\widehat{M}$. In this case, we  replace $\sH$ for $\sH_p$ in the rest of the section, preserving the results.}. 
 The time evolution operator decomposes as, restoring physical units
\beq U_{t} = e^{-i \frac{tc^2}{\hbar}\widehat{M}} e^{- i\frac{t}{\hbar}\widehat{H}} \:.\label{Ut}\eeq
{\em In each eigenspace of the mass operator, the second evolution operator in the right-hand side  coincides with that of the standard formulation of non-relativistic quantum mechanics}.\\
Let us compute the expectation value of an observable $A\in \gB(\sH)$ with respect to 
a state $\Psi \in \sH$, both violating the superselection rule.
We assume that, if $D(\widehat{H})$  is the dense domain of $\widehat{H}$,
\beq \Psi = \sum_{M\in S} \Psi_M\:,\quad  \mbox{$||\Psi|| =1$, $\Psi_M \in D(\widehat{H})$,  $S$ finite $S\subset \sigma_p(\widehat{M})$.} \label{PSI}\eeq
 We suppose that, in real measurement processes, physical instruments 
 perform a temporal averaging along a period of time $T$ depending on the interaction system-instruments. So we define the effective expectation value
 of $A$ in the state represented by $\Psi$:
\beq \langle\langle A \rangle\rangle_{\Psi,T} := \frac{1}{T}\int_0^T \langle \Psi | U^*_t A U_t \Psi\rangle dt\:.\label{W}\eeq
{\bf Remark}.  In the following,  if  $O$ is a self-adjoint, not necessarily bounded, operator with domain $D(O)$, and $\Psi \in D(O)$:
$$\langle O\rangle_{\Psi} := \int_{\sigma(O)} \lambda\: d\langle \Psi| P^{(O)}(\lambda) \Psi\rangle\:, \quad 
\Delta O_{\Psi} := \sqrt{\int_{\sigma(O)} (\lambda^2 - \langle O\rangle^2_{\Psi})\: d\langle \Psi| P^{(O)}(\lambda) \Psi\rangle }\:.$$
 Moreover, in the next pair of propositions we adopt the following physically suggestive notation:
 $b<< a$ means $b/a \to 0^+$ and $A\sim B$ means $A-B \to 0$.\\

\noindent We are in a position to state and prove our first result. 
\begin{proposition} \label{PE1}{\em ({\bf Effective superselction rule I})}
Assume that, in the Hilbert space $\sH$,  the operator of time evolution of a quantum system is the product of two commuting strongly 
continuous one-parameter  unitary groups as in  (\ref{Ut}), where 
 $\sigma(\widehat{M}) = \sigma_p(\widehat{M})\subset (0,+\infty)$, and take $\Psi$ as in 
 (\ref{PSI}).\\
 Let $\delta M_\Psi>0$ be the smallest difference of two different eigenvalues  of $\widehat{M}$  in $S$,
 let $N_\Psi$ indicate the number of elements of $S$, and let $E_\Psi$ be a measure 
of the maximal energetic content of $\Psi$ referring to $\widehat{H}$:
\beq
E_\Psi := \max_{M\in S}\left\{\sqrt{\langle \widehat{H}\rangle^2_{\Psi_M} + (\Delta \widehat{H}_{\Psi_M}})^2 \right\}\:.
\eeq
For every $A\in \gB(\sH)$ with the definition (\ref{W})  for $T>0$ and referring to notations and definitions in Sec. {\em \ref{secclasses}} and {\em \ref{secH}}, 
a dynamical implementation of the mass superselection rule arises
\beq
\langle\langle A \rangle\rangle_{\Psi,T} \sim   \sum_{M\in S} \langle \Psi_M |A\Psi_M \rangle = 
tr(\rho_\Psi A_B)\quad \mbox{as}\quad \frac{2\hbar  N_\Psi^2 ||A||}{ \delta M_\Psi c^2} << T << 
\frac{\hbar}{2 N_\Psi^2 ||A|| E_\Psi}\label{sim}\:.
\eeq
 \end{proposition}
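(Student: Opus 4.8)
\noindent\emph{Proof plan.} The plan is to feed the factorised evolution (\ref{Ut}) into the definition (\ref{W}) and expand the matrix element over the finite set $S$. Because the spectral measure of $\widehat{H}$ commutes with that of $\widehat{M}$, each $P_M$ commutes with $e^{-i\frac{t}{\hbar}\widehat{H}}$; writing $\Psi_M(t):=e^{-i\frac{t}{\hbar}\widehat{H}}\Psi_M$ one therefore has $\Psi_M(t)\in\sH_M\cap D(\widehat{H})$, $\|\Psi_M(t)\|=\|\Psi_M\|$, and $U_t\Psi_M=e^{-i\frac{tc^2}{\hbar}M}\Psi_M(t)$, so that
\[ \langle\Psi|U^*_tAU_t\Psi\rangle=\sum_{M,M'\in S}e^{i\frac{tc^2}{\hbar}(M-M')}\,\langle\Psi_M(t)|A\Psi_{M'}(t)\rangle . \]
The target $\sum_{M\in S}\langle\Psi_M|A\Psi_M\rangle$ is nothing but $\langle\Psi|A_B\Psi\rangle$ (only the indices in $S$ survive in $A_B$), hence equals $tr(\rho_\Psi A_B)$ by Lemma \ref{propstates}; so it remains to show that the $t$-average over $[0,T]$ of the bilinear form above approaches its diagonal part.

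For the diagonal terms I would estimate the drift of $\Psi_M(t)$. First, the elementary bound $\|\widehat{H}\Psi_M\|\le E_\Psi$ follows from the definitions in the Remark preceding the statement together with $\|\Psi_M\|\le\|\Psi\|=1$, since $\langle\widehat{H}\rangle^2_{\Psi_M}+(\Delta\widehat{H}_{\Psi_M})^2=\|\widehat{H}\Psi_M\|^2+\langle\widehat{H}\rangle^2_{\Psi_M}(1-\|\Psi_M\|^2)\ge\|\widehat{H}\Psi_M\|^2$. Then, differentiating $g_M(t):=\langle\Psi_M(t)|A\Psi_M(t)\rangle$ via $\frac{d}{dt}\Psi_M(t)=-\frac{i}{\hbar}e^{-i\frac{t}{\hbar}\widehat{H}}\widehat{H}\Psi_M$ gives $|g_M'(t)|\le\frac{2\|A\|E_\Psi}{\hbar}$, so $|g_M(t)-g_M(0)|\le\frac{2\|A\|E_\Psi}{\hbar}t$ and $\big|\frac1T\int_0^T g_M(t)\,dt-g_M(0)\big|\le\frac{\|A\|E_\Psi T}{\hbar}$. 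Summing over the $N_\Psi$ diagonal indices yields an error at most $\frac{N_\Psi\|A\|E_\Psi T}{\hbar}$.

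For the off-diagonal terms the main observation is that $f(t):=\langle\Psi_M(t)|A\Psi_{M'}(t)\rangle$ has the same Lipschitz constant $\frac{2\|A\|E_\Psi}{\hbar}$, while the phase $e^{i\omega t}$, $\omega=\frac{c^2}{\hbar}(M-M')$, has $|\omega|\ge\frac{c^2}{\hbar}\delta M_\Psi$. Splitting $f(t)=f(0)+(f(t)-f(0))$, the constant part contributes $f(0)\frac{e^{i\omega T}-1}{i\omega T}$, of modulus $\le\frac{2\hbar\|A\|}{c^2\delta M_\Psi T}$, and the remainder is $\le\frac1T\int_0^T|f(t)-f(0)|\,dt\le\frac{\|A\|E_\Psi T}{\hbar}$. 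Adding the at most $N_\Psi(N_\Psi-1)$ off-diagonal contributions to the diagonal error, the total discrepancy between $\langle\langle A\rangle\rangle_{\Psi,T}$ and $tr(\rho_\Psi A_B)$ is at most $\frac{N_\Psi^2\|A\|E_\Psi T}{\hbar}+\frac{2\hbar N_\Psi^2\|A\|}{c^2\delta M_\Psi T}$. The upper constraint in (\ref{sim}) is precisely the statement that the first summand tends to $0$, and the lower constraint that the second does; hence the discrepancy vanishes, giving (\ref{sim}).

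I do not expect a genuine obstacle: this is a two-sided, finite-window Riemann--Lebesgue estimate, and the constants in (\ref{sim}) are tailored so that the two ends of the window annihilate the two error scales. The only mildly delicate points are (i) reading off $\|\widehat{H}\Psi_M\|\le E_\Psi$ from the non-normalised expectation and variance of the Remark, and (ii) keeping the $N_\Psi$ bookkeeping aligned with the window, noting in particular that the window is non-empty only if $\delta M_\Psi c^2$ exceeds a constant multiple of $N_\Psi^4\|A\|^2 E_\Psi$, which is the quantitative echo of the physical hierarchy ``rest energy $\gg$ internal/kinetic energy'' invoked in the surrounding discussion.
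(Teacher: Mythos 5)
Your proof is correct and follows essentially the same route as the paper's: expand $\langle\Psi|U_t^*AU_t\Psi\rangle$ over the finite set $S$, identify the diagonal part with $tr(\rho_\Psi A_B)$ via Lemma \ref{propstates}, and bound the off-diagonal oscillatory contribution by $2\hbar N_\Psi^2\|A\|/(\delta M_\Psi c^2 T)$ and the temporal drift of the matrix elements by a constant times $N_\Psi^2\|A\|E_\Psi T/\hbar$. The only (cosmetic) difference is that you control the drift with a Lipschitz / fundamental-theorem-of-calculus estimate, whereas the paper invokes the Lagrange mean value theorem on a complex-valued function; your variant is slightly cleaner on that technical point but produces the same two error scales and the same window for $T$.
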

\begin{proof} To prove (\ref{sim}), we start noticing that, defining the strongly continuous one-parameter group of unitary operators
$V_t:= e^{-i\frac{t}{\hbar}\widehat{H}}$, we have:
$$ \langle \Psi_M | U^*_t A U_t \Psi_{M'}\rangle =  \langle U_t\Psi_M |  A U_t \Psi_{M'}\rangle
= e^{i\frac{(M-M')c^2t}{\hbar}}\langle V_t\Psi_M |  A V_t \Psi_{M'}\rangle$$
$$= e^{i\frac{(M-M')c^2t}{\hbar}} \left[\langle \Psi_M |  A \Psi_{M'}\rangle +\frac{it}{\hbar} \left( \langle V_{t'}\widehat{H}\Psi_M |  A V_{t'} \Psi_{M'}\rangle- \langle V_{t'}\Psi_M |  A V_{t'}\widehat{H} \Psi_{M'}\rangle\right)\right]$$
where we have used Stone theorem, Lagrange theorem and $t' \in (0,t)$ is an undetermined point.
Therefore we have that:
\beq \langle\langle A \rangle\rangle_{\Psi,T}  = tr(\rho_\Psi A_B) + \hbar \sum_{M \neq M'} \frac{e^{i\frac{(M-M')c^2T}{\hbar}}-1}{i(M'-M)c^2T}
\langle \Psi_M |  A \Psi_{M'}\rangle + R_T\label{expansion}\:.\eeq
To conclude it is enough proving that the second and the third therm in the right hand side can be neglected when, varying 
the various parameters,
$\frac{2\hbar  N_\Psi^2 ||A||}{ \delta M_\Psi c^2 T} \to 0^+$ 
and $\frac{2 N_\Psi^2 ||A|| E_\Psi T}{\hbar} \to 0^+$.
Let us focus on the second term. As $||\Psi||=1$, we have $||\Psi_M|| \leq 1$ and thus:
\beq \left|\hbar \sum_{M \neq M'} \frac{e^{i\frac{(M'-M)c^2T}{\hbar}}-1}{i(M'-M)c^2T}
\langle \Psi_M |  A \Psi_{M'}\rangle\right| \leq \frac{2\hbar N^2_\Psi}{\delta M_\Psi c^2T} ||A||\:. \label{one}\eeq
The rest can be worked out as follows.
\beq R_T= \frac{1}{T}\sum_{M,M'} \int_0^T  
 e^{i\frac{(M'-M)c^2t}{\hbar}} \left[\frac{it}{\hbar} \left( \langle V_{t'}\widehat{H}\Psi_M |  A V_{t'} \Psi_{M'}\rangle- \langle V_{t'}\Psi_M |  A V_{t'}\widehat{H} \Psi_{M'}\rangle\right)\right] dt
\:.
\eeq
 Thus, also using the fact that $V_t'$ is unitary and then preserves the norms,
 $$|R_T| \leq \sum_{M,M'}\frac{1}{\hbar T} \int_0^T  T \left( |\langle V_{t'}\widehat{H}\Psi_M |  A V_{t'} \Psi_{M'}\rangle| +|\langle V_{t'}\Psi_M |  A V_{t'}\widehat{H} \Psi_{M'}\rangle|\right)dt$$
 $$\leq \sum_{M,M'}\frac{T}{\hbar} (||\widehat{H}\Psi_M|| \: ||A|| \: ||\Psi_{M'}|| + ||\widehat{H}\Psi_{M'}|| \: ||A|| \: ||\Psi_{M}|| )$$
$$ \leq \sum_{M,M'}\frac{T}{\hbar}||A||\:  ||\Psi_{M}||\: ||\Psi_{M'}|| \left(\sqrt{\langle \widehat{H}\rangle^2_{\Psi_M} + (\Delta \widehat{H}_{\Psi_M}})^2 
 + \sqrt{\langle \widehat{H}\rangle^2_{\Psi_{M'}} + (\Delta \widehat{H}_{\Psi_{M'}}})^2\right)$$
$$\leq\frac{T}{\hbar}||A||\:   \sum_{M,M'} \left(\sqrt{\langle \widehat{H}\rangle^2_{\Psi_M} + (\Delta \widehat{H}_{\Psi_M}})^2 
 + \sqrt{\langle \widehat{H}\rangle^2_{\Psi_{M'}} + (\Delta \widehat{H}_{\Psi_{M'}}})^2\right)  \leq ||A||\frac{2T N_\Psi^2 E_\Psi}{\hbar}  \:, $$
  where $N_\Psi$ is the number of elements of $S$ and 
  we have used the fact that $||\Psi_M|| \leq 1$ because $||\Psi||=1$ and that, using the spectral measure of  $\widehat{H}$,
 $||\widehat{H}\Psi_{M}||^2 = \langle \widehat{H}\rangle^2_{\Psi_{M}} + (\Delta \widehat{H}_{\Psi_{M}})^2$.
 From the obtained estimate and  (\ref{one}) it is clear that $\frac{2\hbar N_\Psi^2 ||A||}{ \delta M_\Psi c^2} <<  \:T << \frac{\hbar}{2 ||A|| N_\Psi^2 E_\Psi}$,
that is $\frac{2\hbar  N_\Psi^2 ||A||}{ \delta M_\Psi c^2 T} \to 0^+$ 
and $\frac{2 N_\Psi^2 ||A|| E_\Psi T}{\hbar} \to 0^+$, 
 imply the first identity in  (\ref{sim}). 
 \end{proof}
 
\noindent {\bf Remark}. Our result can easily  be extended to finite mixtures of 
states $\Psi$, each containing a finite number of components $\Psi_M$.\\

\noindent  Though our model is quite naive, it is interesting to compute explicitly $\frac{\hbar}{ \delta M_\Psi c^2}$ and $\frac{\hbar}{2 E_\Psi}$ in some concrete case. 
 We assume $||A||=1$ supposing that $A$
 is a projection operator: A {\em yes-no} elementary observable. In the following, $a<<b$ can be interpreted in the standard way: The order of magnitude of $b$ is greater than that of $a$.\\  
With $\delta M_\Psi$ comparable with the electron mass and $E_\Psi$ of the order of ground state energy of hydrogen atom we have
$$N^2_\Psi 10^{-20}s << T << N^{-2}_\Psi 10^{-17}s\:.$$ 
It implies that $N_\Psi$ has to be quite small, of the order of the unit in the practise.
 Conversely, increasing the mass, the situation dramatically changes.
For macroscopic values, i.e., $\delta M_\Psi$ of the order of $1Kg$ and $E_\Psi$ of the order of $1J$ we obtain:
$$N^2_\Psi 10^{-50}s << T << N^{-2}_\Psi 10^{-34}s\:.$$ 
Therefore, $N_\Psi$ can be chosen quite large. For instance, values 
$N_\Psi \sim 10^3$ are allowed. \\
The upper bound for $T$ becomes  smaller and smaller as soon as $E_\Psi$ increases.
In particular it happens approaching the  macroscopic classical realm.  However, in that case another effect has to be taken into account, due to the fact that also $\delta M_\Psi$
is supposed to increase and,  in the macroscopic world, the energetic content $Mc^2$ of a mass 
is considerably larger than the typical scales of mechanical energy allowable to the mass.  
  That effect suppresses the part of $R_T$ that is not in agreement with the 
superselection rule.  As a matter of fact, a weaker version of Bargmann's rule turns out to be valid anyway  if referring to temporally averaged quantities on longer periods of time,  regardless
any upper bound for $T$. Indeed the following statement holds essentially relying upon Riemann-Lebesgue's lemma.

\begin{proposition}{\em ({\bf Effective superselction rule II})}
 With the same hypotheses, definitions and notations  as in Prop.\ref{PE1},
one also has for every $A\in \gB(\sH)$:
\beq
 \frac{1}{T} \int_0^T  \langle \Psi| U^*_t A U_t \Psi \rangle  dt \sim  \frac{1}{T} \int_0^T tr\left(\rho_\Psi e^{i\frac{t}{\hbar}\widehat{H}}A_B e^{- i\frac{t}{\hbar}
 \widehat{H}}\right) dt 
\eeq
for
$$2||A||\: N^2_\Psi E_\Psi  << \delta M_\Psi c^2   \quad \mbox{and} \quad \frac{2\hbar  N_\Psi^2 ||A||}{ \delta M_\Psi c^2} << T \:.$$
\end{proposition}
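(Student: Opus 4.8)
\noindent The plan is to isolate, inside $\langle \Psi| U_t^* A U_t\Psi\rangle$, the part that is insensitive to the mass phases from the part that genuinely oscillates in $t$. Writing $\Psi=\sum_{M\in S}\Psi_M$ as in (\ref{PSI}) and using (\ref{Ut}) together with $U_t\Psi_M = e^{-i\frac{tc^2M}{\hbar}}V_t\Psi_M$, where $V_t:=e^{-i\frac{t}{\hbar}\widehat{H}}$ commutes with each $P_M$ (the spectral measures of $\widehat{M}$ and $\widehat{H}$ commute), one gets
$$\langle \Psi| U_t^* A U_t\Psi\rangle = \sum_{M\in S}\langle V_t\Psi_M| A V_t\Psi_M\rangle + \sum_{M\neq M'}e^{i\frac{(M-M')c^2t}{\hbar}}\langle V_t\Psi_M|AV_t\Psi_{M'}\rangle\:.$$
Since $V_t\Psi_M\in\sH_M$ and, by (\ref{BA}), $A_B\xi=P_MA\xi$ for $\xi\in\sH_M$, the diagonal sum equals $\sum_{M}\langle V_t\Psi_M|A_BV_t\Psi_M\rangle = tr\big(\rho_\Psi\, e^{i\frac{t}{\hbar}\widehat{H}}A_B e^{-i\frac{t}{\hbar}\widehat{H}}\big)$ with $\rho_\Psi$ as in Lemma \ref{propstates}; this is precisely the integrand on the right-hand side of the thesis. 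Hence, after the temporal average $\frac{1}{T}\int_0^T(\cdot)\,dt$, the quantity to be controlled is just
$$\cE(T):=\frac{1}{T}\sum_{M\neq M'}\int_0^T e^{i\omega_{MM'}t}\,f_{MM'}(t)\,dt\:,\qquad \omega_{MM'}:=\frac{(M-M')c^2}{\hbar}\neq0\:,$$
with $f_{MM'}(t):=\langle V_t\Psi_M|AV_t\Psi_{M'}\rangle$, and it suffices to show $\cE(T)\to0$ in the stated regime.

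\noindent The core is then a Riemann--Lebesgue step applied to each of these finitely many integrals. Because $\Psi_M\in D(\widehat{H})$, Stone's theorem gives that $t\mapsto V_t\Psi_M$ is $C^1$ with derivative $-\frac{i}{\hbar}V_t\widehat{H}\Psi_M$, so each $f_{MM'}$ is $C^1$ with $|f_{MM'}(t)|\le||A||\,||\Psi_M||\,||\Psi_{M'}||$ and, since $V_t$ is unitary, $\sup_t|f'_{MM'}(t)|\le \frac{||A||}{\hbar}\big(||\widehat{H}\Psi_M||\,||\Psi_{M'}||+||\Psi_M||\,||\widehat{H}\Psi_{M'}||\big)$, uniformly in $t$. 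Integrating by parts against $e^{i\omega_{MM'}t}$,
$$\left|\frac{1}{T}\int_0^T e^{i\omega_{MM'}t}f_{MM'}(t)\,dt\right|\le \frac{|f_{MM'}(T)|+|f_{MM'}(0)|}{T\,|\omega_{MM'}|}+\frac{\sup_t|f'_{MM'}(t)|}{|\omega_{MM'}|}\:.$$
Summing over the at most $N_\Psi^2$ pairs $M\neq M'$ in $S$ and bounding $||\Psi_M||\le1$ (as $||\Psi||=1$), $|\omega_{MM'}|\ge \delta M_\Psi c^2/\hbar$, and $||\widehat{H}\Psi_M||\le E_\Psi$ exactly as in the proof of Prop.\ref{PE1}, the boundary contribution is at most $\dfrac{2\hbar\,||A||\,N_\Psi^2}{\delta M_\Psi c^2\,T}$ and the interior contribution at most $\dfrac{2\,||A||\,N_\Psi^2 E_\Psi}{\delta M_\Psi c^2}$. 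Thus $|\cE(T)|$ is dominated by the sum of these two numbers, which are precisely the quantities assumed to tend to $0^+$ under $\frac{2\hbar N_\Psi^2||A||}{\delta M_\Psi c^2}<<T$ and $2||A||N_\Psi^2E_\Psi<<\delta M_\Psi c^2$; hence $\cE(T)\to0$, i.e. the claimed $\sim$.

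\noindent The one substantive point beyond Prop.\ref{PE1} — and where I expect the only subtlety to lie — is that one cannot simply recycle the Lagrange-remainder estimate used there: replacing $f_{MM'}(t)$ by $f_{MM'}(0)+t\,f'_{MM'}(t')$ would contribute a term of size $\sim T\sup|f'_{MM'}|$, and dividing by $T$ leaves something that does \emph{not} decay as $T$ grows. One must instead keep the oscillation $e^{i\omega_{MM'}t}$ and integrate by parts, so that the $T$ produced by estimating $f'$ cancels against the $1/T$; this is exactly why the second hypothesis of the Proposition constrains $\delta M_\Psi c^2$ relative to $||A||N_\Psi^2 E_\Psi$ rather than imposing an upper bound on $T$, and why no upper bound on $T$ appears at all. (The extension to finite mixtures is immediate, as in the remark following Prop.\ref{PE1}.)
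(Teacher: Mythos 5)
Your proof is correct and follows essentially the same route as the paper's: isolate the diagonal ($M=M'$) part as $tr\left(\rho_\Psi V_t^* A_B V_t\right)$ and suppress the off-diagonal interference by integrating by parts against $e^{i(M-M')c^2t/\hbar}$, with Stone's theorem bounding the derivative of $t\mapsto\langle V_t\Psi_M|AV_t\Psi_{M'}\rangle$ in terms of $E_\Psi$. The only difference is bookkeeping: you integrate by parts on $f_{MM'}(t)$ directly (picking up the boundary values $|f_{MM'}(0)|,|f_{MM'}(T)|$), whereas the paper first splits off the constant $f_{MM'}(0)$ and reuses the bound (\ref{one}) from Prop.\ref{PE1}; the resulting estimates coincide.
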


\begin{proof} Again $V_t:= e^{- i\frac{t}{\hbar}\widehat{H}}$.
Distinguishing between the  
the contribution of case $M\neq M'$ and that $M=M'$ in the form of $R_T$:
$$R_T =  \frac{1}{T} \int_0^T  
\left[ \sum_{M} \langle V_t \Psi_M |A V_t \Psi_{M}\rangle -  \sum_{M} \langle  \Psi_M |A \Psi_{M}\rangle \right] dt + R'_T $$ 
$$ =  \frac{1}{T} \int_0^T tr(\rho_\Psi V^*_t A_B V_t) dt  - tr(\rho_\Psi A_B ) + R'_T\:, $$
with
$$ R'_T=\sum_{M \neq M'}\frac{1}{T} \int_0^T  
 e^{i\frac{(M'-M)c^2t}{\hbar}} \left[\langle V_{t}\Psi_M |  A V_{t} \Psi_{M'}\rangle- \langle \Psi_M |  A  \Psi_{M'}\rangle\right] dt\:.
$$
Integrating by parts in the right-hand side,
using Stone theorem and going the same way leading to $|R_T| \leq ||A|| 2T N^2_\psi E_\Psi/\hbar$, one obtains:
$$
|R'_T| \leq \frac{2||A|| N_\Psi^2E_\Psi}{\delta M_\Psi c^2}\:.
$$
Finally, taking (\ref{W}), (\ref{expansion}), and (\ref{one})  into account, the obtained bound  entails the thesis.
\end{proof}

 \section{Discussion}
In this paper we have focussed on  Giulini's model studied in \cite{Giulini,Giulini2} for the mass operator in quantum non-relativistic theories invariant under 
the Galileian group. The reason has been twofold. On the one hand we wished to improve 
 that model from a pure mathematical-physics viewpoint, getting rid of some difficulties (continouous, non positive spectrum of the mass operator). On the other hand, we have drawn on the idea proposed in \cite{Giulini,Giulini2} 
 that some superselection rules, formally obtained imposing some invariance requirement,
  may actually arise from some real dynamical physical process; we have then intended to explicitly discuss
  a, perhaps quite naive, effective de-coherence process responsible for the mass superselection rule.  Both goals have been reached. 
Regarding the first goal, changing the classical extended phase space with respect to that defined in \cite{Giulini}, we have indeed constructed 
a classical model of the mass as a dynamical variable
whose quantization gives rise to a positive point-spectrum for the corresponding self-adjoint operator. This  has been done performing an angle angular-momentum quantization rather than
a position momentum quantization. This was possible in view of the change of the topology of the phase space passing from $\bR$ to $\bS^1$ concerning the domain of the coordinates conjugated with the (signed) masses. 
In our model a further degree of freedom arises that may be interpreted as a conserved charge of the studied particles.
Nevertheless we have remarked that this further degree of freedom should disappear if adopting a quantization procedure based on POVM rather that PVM, like 
the well-known one  exploited in phase number-of particle quantization.\\
  Concerning the second goal we have shown how, assuming 
  that a mass operator exists -- not necessarily with the structure discussed in \cite{Giulini} or in the first part of this work, 
  but with (positive) pure {\em point spectrum} -- the de-coherence process arises just to the 
  presence of the mass operator as a term present in the non-relativistic Hamiltonian. That term is usually and incorrectly 
  neglected  when performing the most simple
  non-relativistic approximation.   
   This effective de-coherence process, when referred to the model of mass operator introduced in the first part of the paper, can alternatively be
  interpreted as an averaging procedure on the unobservable part of the extended  symmetry group $U(1)\times_\omega \cG$. However, for our final result, what is actually the symmetry group of the system -- either $\cG$ or $U(1)\times_\omega \cG$ -- does not seem very relevant from the physical viewpoint.  The Poincar\'e group 
  is, obviously,  the true invariance group  and no constraints on states and observables exist. In our, perhaps naive view, the superselection rule arises as a de-coherence process within the non-relativistic approximation when one takes into account the fact that the mass is an operator, assumed to have a discrete spectrum, and that physical instruments perform a temporal average.
 That process selects allowed states and permitted observables in the following way. Physical measurements cannot distinguish 
 the elements inside certain equivalence classes of states and observables as discussed in Sec. \ref{secclasses} and 
 each of these classes is in fact labelled by, respectively, states and observables allowed by Bargmann superselection rule. Dealing with equivalence classes, namely permitted states and observables, the Galileian group 
 becomes a symmetry group.\\
Alternative approaches are possible, possibly directly leading  to $U(1)\times_\omega \cG$ as a symmetry group 
starting from a group including Poincar\'e one as a subgroup. A candidate may be the conformal group of Minkowski spacetime $SO(4,2)$, following 
the way outlined in Ch.13 of \cite{BR}. \\
We wish eventually stress that  a physical key hypothesis of all this work is the point-wise structure of the spectrum of the mass operator. That hypothesis is supposed to hold
 {\em a posteriori} since it guarantees the validity of our results. However, to author's knowledge, there is no general physical principle leading to that hypothesis 
 {\em a priori}. This is the major gap in our construction and in every similar models.

 \section*{Acknowledgements} The authors are grateful to Alessio Recati  and Claudio Dappiaggi 
 for valuable comments and suggestions.\\
 
\noindent  {\em This work relies upon E.Annigoni's Master Thesis in Mathematics (Trento University a.y. 2010-2011, supervisor V.Moretti), 
 extending some results presented therein.}
\appendix
\section{Proof of Bargmann no-go statements}
Referring to the discussion at the end of Sec.\ref{sec1}, we show that the no-go statement (1) holds and it implies the no-go statment (2). 
Suppose that the unitaries $V_g := \rho_g U^{(M)}_g$ have trivial multipliers. Consider the elements $f=  (0,{\bf c}, {\bf 0},I)$ and $g=  (0,{\bf 0}, {\bf v},I)$, by direct inspection find:
$(U^{(M)}_{f})^{-1}(U^{(M)}_{g})^{-1}U^{(M)}_{f} U^{(M)}_{g}= e^{-2iM {\bf c}\cdot {\bf v}}I$
that implies $V_{f^{-1}}V_{g^{-1}}V_{f} V_{g}= e^{-2iM {\bf c}\cdot {\bf v}}I$.
  However, since $f^{-1}g^{-1}f g= e$ and the multipliers of $V_h$ are trivial, we also have
$V_{f^{-1}}V_{g^{-1}}V_{f} V_{g}= I$, that is in contradiction  with what we have already found unless $M=0$ that is not allowed.
So (1) holds true. Now, for $M>M'$, suppose that there are phases such that
$\rho'_g U^{(M)}_g = \rho_g U^{(M')}_g$ for every $g\in \cG$. Redefining $\rho_g^{-1}\rho'_g$ as 
$\rho_g$, we have:
$U^{(M')}_g = \rho_g U^{(M)}_g$ for every $g\in \cG$. In terms of multipliers 
 it implies
\beq  \omega^{(M-M')}(g',g)= \frac{\rho_{g'g}}{\rho_g\rho_{g'}}\quad \mbox{for all $g,g'\in \cG$}\:.
\eeq
This identity immediately implies that  the mulipliers of $V_g := \rho_g U^{(M-M')}_g$ are 
trivial, but it is impossible as established above. So (2) holds as well.
\section{Proof of some propositions}
\begin{proof} ({\bf Lemma \ref{propstates}}.)\\
The only thing to be proved is that $\rho_\Psi$ is trace class. The convergence of the orthogonal series decomposing  $\Psi$ in its components in every $\sH_M$
 easily implies that the series of $\rho_\Phi$ converges in the uniform topology so that $\rho_\Psi$ is compact.
$\rho_\Psi$ is trace class (with $tr\rho_\Psi = ||\Psi||^2 =1$) since it is positive and thus $|\rho_\Psi|= \rho_\Psi$, in view of the very definition of the latter, admits trace on a Hilbert basis
obtained by completing $\{\Phi_M\}_{M\in S}$.
\end{proof}

\begin{proof} ({\bf Theorem \ref{propobs}}.)\\
Notice that (\ref{tr}) immediately follows from (\ref{BA}). Linearity of $\gB(\sH) \ni A \mapsto A_B \in \gA_B$ is obvious. 
Let us prove the latter together with (ii). Since $\{P_M AP_M\Psi\}_{M\in \sigma(\widehat{M})}$ 
is an orthogonal set:
$$||\sum_{M\in \sigma(\widehat{M})} P_M AP_M\Psi||^2 = \sum_{M\in \sigma(\widehat{M})} ||P_M AP_M\Psi||^2 \leq 
||A||^2 \sum_{M\in \sigma(\widehat{M})} ||P_M\Psi||^2 \leq ||A||^2||\Psi||^2<+\infty$$
so $A_B$ is defined on the whole $\sH$ and $||A_B||\leq ||A||$. The item (i) is now obviously true. (iii) arises using the weak convergence of the 
series, the uniqueness
of the adjoint operator and the self-adjointness of the $P_M$. (iv) holds because  $A_B$ is obviously positive if $A$ is, $||(A_B)^{1/2}||^2 = ||A_B||\leq ||A|| $
and $||A||=1$ since $A$ is an orthogonal projector and, finally, $||(A_B)^{1/2}|| \leq 1$ implies $\langle \Psi |A_B \Psi \rangle \leq \langle 
\Psi |\Psi \rangle$, that is $A_B\leq I$. Let us prove (v). If $A_B=A'_B$ then $tr(\rho A)= tr(\rho A')$ for all $\rho \in \gS_B$. The converse implication,
using linearity and (iii) and decomposing any element of $\gB(\sH)$ into self-adjoint and anti self-adjoint part,  immediately arises if, 
for $A=A^*\in \gB(\sH)$, $tr(\rho A)= 0$ for all $\rho \in \gS_B$ entails 
$A_B=0$. Let us prove that it holds true.  Assume that $tr(\rho A)= 0$ for all $\rho \in \gS_B$ for a given $A=A^*$. Fix $P_M$. 
By hypotheses, as every element $\Psi_M \in \sH_M$ can be completed as a basis of that space and used to compute the trace, 
taking $\rho = \Psi_M\langle \Psi_M|\cdot \rangle$ one has that it must be $\langle \Psi_M | P_MAP_M \Psi_M\rangle =0$.
Since $\Psi_M$ is arbitrary and $P_MAP_M$ is self-adjoint, it easily implies (using the polarization identity) 
that $\langle \Psi_M | P_MAP_M \Psi'_M\rangle =0$ for every pair $\Psi_M, \Psi'_M \in \sH_M$, so that $P_MAP_M=0$
and thus $A_B=0$.
\end{proof}

 \begin{proof} ({\bf Theorem \ref{teotot}}.)\\ 
 We start by proving that $\rho_B \in \gS_B$. By construction $\rho_B \geq 0$ when $\rho\geq 0$. Referring to a Hilbert basis  of 
 eigenvectors of $\widehat{M}$,
$\{\psi_{M,k}\}$ with $\psi_{M,k} \in \sH_M$,
 we have that:
 $$\sum_{M, k} \langle \psi_{M,k} |\rho_B\psi_{M,k}\rangle =  \sum_{M, k} \langle \psi_{M,k} |\rho\psi_{M,k}\rangle = tr(\rho) =1 < +\infty\:.$$
 Therefore $\rho_B \in \gS_B$. With the same argument (\ref{trtr}) easily arises. 
The statement  (i) is equivalent to $(\rho-\rho')_B=0$ iff $tr((\rho-\rho')A)=0$ for every $A\in \gA_B$. 
The proof of this statement is the same as of (v) of theorem  \ref{propobs}, taking $A= \Psi_M \langle \Psi_M| \cdot \rangle$
for every $P_M$ and every $\Psi_M \in \sH_M$.
 The last statement is obvious.
 \end{proof}

\end{document}